\spnewtheorem{constructor}[newcounter]{Constructor}{\bfseries}{\itshape}
\lstdefinestyle{myStyle}{
    belowcaptionskip=1\baselineskip,
    breaklines=true,
    frame=none,
    numbers=none,
    basicstyle=\scriptsize \ttfamily,
keywordstyle=\bfseries\color{red!40!black},
commentstyle=\itshape\color{purple!40!black},  
    mathescape = true,
    identifierstyle=\color{black},
    stepnumber=1,
    firstnumber=1,
    numbers=none,
    numberstyle=\tiny,
    escapechar=@,
}
\lstdefinelanguage{custom}{
	keywords=[1]{individual, function, while, atomic, assume,  if, else, for, in , relation},
	keywordstyle=[1]{\bfseries},
	keywords=[3]{machine,index,array,bool},
	keywordstyle=[3]{\bfseries\color{blue!40!black}},
	comment=[l]{\#}
}
\newcommand{\repeatclaim}[1]{%
  \begingroup
  \newcommand{\theclaim}{\ref{#1}}%
  \expandafter\expandafter\expandafter\claim
  \csname repclaim@#1\endcsname
  \endclaim
  \endgroup
}
\xdef\csname repclaim@#1\endcsname{%
    \unexpanded\expandafter{\BODY}%
  }%
\unskip\label{#1}\endclaim
\begin{document}

\newif\ifcomments
\commentsfalse

\newif\ifshort
\shortfalse

\newif\ifconference
\conferencetrue

\ifcomments
\newcommand{\sharon}[1]{\textcolor{purple}{SH: #1 }}
\newcommand{\raz}[1]{\textcolor{cyan}{RL: #1 }}
\else
\newcommand{\sharon}[1]{}
\newcommand{\raz}[1]{}
\fi
\newcommand{\commentout}[1]{}

\newcommand{\nat}{\mathbb{N}}
\newcommand{\seq}[1]{{\vec #1}}
\newcommand{\sort}[1]{\mathsf{#1}}

\newcommand{\formula}[0]{\alpha}
\newcommand{\formulaa}[0]{\beta}
\newcommand{\domain}[0]{\mathcal D}
\newcommand{\interp}[0]{\mathcal I}
\newcommand{\struct}{s}
\newcommand{\structset}{\mathrm{struct}}
\newcommand{\assign}{v}
\newcommand{\otherassign}{u}
\newcommand{\assignset}{\mathrm{assign}}
\newcommand{\pairname}[0]{a-structure}%
\newcommand{\pairnames}[0]{a-structures}%
\newcommand{\pairset}[0]{\mathrm{a\text-struct}}

\newcommand{\pre}[0]{0}
\newcommand{\post}[0]{1}

\newcommand{\Tspec}[0]{\mathcal T}

\newcommand{\rankname}[0]{\mathrm{Rk}}
\newcommand{\low}[0]{0}
\newcommand{\high}[0]{1}
\newcommand{\parameters}[0]{parameters}
\newcommand{\reduced}[0]{\varphi_{<}}
\newcommand{\reducedsuper}[1]{\varphi_{<}^{#1}}
\newcommand{\conserved}[0]{\varphi_{\leq}}
\newcommand{\conservedsuper}[1]{\varphi_{\leq}^{#1}}

\newcommand{\pair}[0]{(\struct,\assign)}
\newcommand{\pairlow}[0]{(\struct_\low,\assign_\low)}
\newcommand{\pairhigh}[0]{(\struct_\high,\assign_\high)}
\newcommand{\twopair}[0]{\pairlow,\pairhigh}

\newcommand{\liveprop}{\mathcal P}
\newcommand{\globalinv}[0]{\rho}
\newcommand{\triggerinv}[0]{\phi}
\newcommand{\helpful}[0]{\psi}
\newcommand{\globally}[0]{\square}
\newcommand{\eventually}[0]{\lozenge}
\newcommand{\prop}[0]{\mathcal P}

\newcommand{\orderformula}[0]{\ell}
\newcommand{\immutorder}[0]{\mathrm{order}}

\newcommand{\concatvar}[0]{{\cdot}}
\newcommand{\concatfunc}[0]{{\cdot}}
\newcommand{\bijec}[0]{\sigma}
\newcommand{\insup}[0]{\circ}

\newcommand{\mypara}[1]{\medskip \noindent \emph{#1}}

\newcommand{\FV}{\textit{FV}}
\newcommand{\Vars}{\textit{Vars}}
\newcommand{\terminterp}[3]{{#1}^{#2,#3}}

\newcommand{\skd}{\mathsf{skd}}
\newcommand{\nextf}{\mathsf{next}}
\newcommand{\ptr}{\mathsf{ptr}}
\newcommand{\lt}{\mathsf{lt}}
\newcommand{\priv}{\mathsf{priv}}
\newcommand{\botm}{\mathsf{bot}}
\newcommand{\arr}{a}

\title{Implicit Rankings for Verifying Liveness Properties in First-Order Logic}

\author{\empty}
\institute{\empty}
\author{Raz Lotan\,\orcidlink{0009-0008-5883-5082} \and Sharon Shoham\,\orcidlink{0000-0002-7226-3526}}
\institute{Tel Aviv University, Tel Aviv, Israel\\
\email{lotanraz@tauex.tau.ac.il}\\}

\maketitle            

\begin{abstract}
Liveness properties are traditionally proven using a ranking function that maps system states to some well-founded set. Carrying out such proofs in first-order logic enables automation by SMT solvers.
However, reasoning about many natural ranking functions is beyond reach of existing solvers.
To address this, 
we introduce the notion of implicit rankings --- first-order formulas that soundly approximate the reduction of some  ranking function without defining it explicitly. 
We provide recursive constructors of implicit rankings that can be instantiated and composed to induce a rich family of implicit rankings.
Our constructors use quantifiers to approximate reasoning about useful primitives such as cardinalities of sets and unbounded sums that are not directly expressible in first-order logic.
We demonstrate the effectiveness of our implicit rankings by verifying liveness properties of several intricate examples, including  Dijkstra's $k$-state, $4$-state and $3$-state self-stabilizing protocols.
\end{abstract}

\section{Introduction}\label{sec:introduction}

Liveness properties of a system 
assert that some desirable behavi\textbf{}or  eventually happens in all executions of the system.
The most common approach to proving liveness properties
is based on the notion of a well-founded ranking.
Such a proof goes by finding a ranking function $f$ from the set of states of the system to some well-founded set $(A, <)$ such that for any transition of the system from state $\struct$ to state $\struct'$ where the desired behavior does not yet occur, the ranking is reduced, i.e.,
$f(\struct') < f(\struct)$. 
Thus, well-foundedness of $A$ 
ensures that there is no infinite execution that 
does not eventually satisfy the property.

Many recent works use first-order logic (FOL) for verifying safety and liveness properties~\cite{lotan2024provingcutoffboundssafety,towards_liveness_proofs,liveness_to_safety,prophecy,paxos_made_epr,modularity_deductive,lvr}.
FOL has been established as a useful tool for modeling and verifying systems, mostly due to the success of automatic solvers that %
answer complicated satisfiability queries in seconds.
One challenge that arises when proving liveness properties in FOL is that well-foundedness is not directly expressible in FOL.
Additionally, many common primitives that are useful for defining ranking functions, such as cardinality of sets, and sums over unbounded domains, cannot be directly captured in FOL.

We present an approach that overcomes these hurdles and 
facilitates carrying out proofs based on ranking functions in FOL. 
Our approach utilizes the observation that it suffices to encode the reduction
in the ranking function without explicitly encoding the ranking function itself. Furthermore, the reduction %
need not be encoded precisely, but can be soundly approximated.
For this purpose, we define the notion of an \emph{implicit ranking}, which can be soundly used in liveness proofs in place of explicit ranking functions.
An implicit ranking consists of a reduction formula, $\reduced$, and a conservation formula, $\conserved$. These are two-state first-order formulas
for which there exists \emph{some} ranking function $f$ mapping the states to elements of \emph{some} well-founded set $(A,<)$ 
such that whenever states $s$ and $s'$ satisfy $\reduced$ we have $f(s')<f(s)$ and whenever $s$, $s'$ satisfy $\conserved$ we have $f(s') \leq f(s)$.
Conservation is needed, for example, in proof rules for verifying liveness under fairness assumptions.

We then introduce recursive constructors for implicit rankings that can be instantiated and composed to induce a rich family of implicit rankings.
\ifshort \else Our constructors are based on familiar notions from order theory, such as pointwise ordering and lexicographic ordering, which can be used to lift and aggregate orders in various ways, adapted to the first-order setting. %
\fi
A key component of our approach is the introduction of domain-based constructors that use quantification over the domain of a state to express aggregation of rankings.
For example we can express 
a pointwise aggregation of given rankings, or a lexicographic aggregation %
based on a given order on the elements in the domain.
When composed, such aggregations can capture complex ranking arguments that are unattainable by existing methods.

Implicit rankings produced by domain-based constructors are sound for finite, albeit unbounded, domains. Such domains are common when reasoning about distributed protocols, concurrent systems, arrays, and so on, where the set of nodes, array indices, etc.\ %
is finite but not fixed. In fact, liveness of such systems often depends on the finiteness of the domain. Thus, %
such constructors allow us to utilize finiteness of the domain in liveness proofs, even though finiteness itself is not definable in FOL.

As notable examples, we use our constructors to produce implicit rankings for the self-stabilization property of Dijkstra's $k$-state, $4$-state and $3$-state protocols~\cite{dijkstra_self_stab}.
In these examples,  domain-based constructors are able to replace reasoning about unbounded sums and set cardinalities in novel ways.
We use the implicit rankings within a sound proof rule to verify the examples, where the premises of the rule are discharged automatically by an SMT solver. 
To the best of our knowledge, this is the first SMT-based verification of the more challenging $3$-state and $4$-state protocols, and it is simpler than existing proofs for such protocols~\cite{certification_3_state_stab,kessels_3state_stab_proof,dijkstra_3state_stab_proof,dijkstra_4state_stab_note,mechanized_k_state_stab,ghosh_stab,regular_abstractions,rewrite_self_stab,automatic_convergence_self_stab,self_stab_coq}.

\paragraph{Contributions.}
\begin{itemize}
    \item We define \emph{implicit rankings}, which soundly approximate reduction and conservation of some ranking function in FOL, and show how these implicit rankings can be used in liveness proofs (\Cref{sec:ranking}).
        \item We introduce constructors of implicit rankings based on familiar notions from order theory,
        including domain-based constructors that are sound for finite but unbounded domains and can sometimes replace reasoning about unbounded summations and set cardinalities (\Cref{sec:constructions}).
    \item We implement the proposed constructors in a tool for verifying liveness properties of first-order transition systems, 
    and demonstrate the power of our constructors by verifying a set of examples of liveness properties, including Dijkstra's self-stabilizing protocols (\Cref{sec:evaluation}).
    \end{itemize}
\ifshort
In the next section we present two examples that motivate our approach. \Cref{sec:prelims} then provides the necessary background for  \Crefrange{sec:ranking}{sec:evaluation} and
\Cref{sec:related} discusses related work and concludes the paper. 
\else 
The rest of the paper is organized as follows. \Cref{sec:motivating_examples} describes two motivating examples. \Cref{sec:prelims} provides the necessary background on transition systems in FOL and order theory. %
\Cref{sec:ranking,sec:constructions,sec:evaluation} describe the aforementioned contributions.
\Cref{sec:related} discusses related work and concludes the paper. 
In \Cref{appendix:abstractions,appendix:definition,appendix:heights,appendix:linsums,appendix:proof_rule,appendix:proofs,appendix:totality} we discuss some more interesting aspects of our constructors, give proofs of their soundness and expand on our evaluation.
\fi

\vspace{-0.2cm}
\section{Motivating Examples}\label{sec:motivating_examples}
\begin{example}\label{example:toystab}
As a first motivating example (\Cref{fig:toy_stab}), we consider an abstraction of a self-stabilizing protocol.
In a self-stabilizing protocol privileges are initially distributed arbitrarily in a network, but eventually the protocol converges to a stable state where only one machine holds a privilege.
The protocol we consider abstracts the movement of privileges in Dijkstra's $k$-state protocol~\cite{dijkstra_self_stab}.
While the abstraction does not enjoy stabilization, it suffices for showing a property %
that is used in the proof of stabilization for Dijkstra's protocol.

In this protocol, the network consists of a finite number of machines, arranged in a ring, with one machine, called the bottom machine ($\botm$), being distinguished.
Every machine $i$ has a field $i.\nextf$ which directs to its right-hand neighbor in the ring.
At any moment any machine may be privileged or not. 
Initially, privileges are assigned arbitrarily, with the guarantee that at least one machine is privileged.
At each iteration, an arbitrary privileged machine is scheduled, the scheduled machine loses its privilege and its right-hand neighbor becomes privileged (whether it was already privileged or not).
The property we wish to prove is that machine $\botm$ is eventually scheduled: $\eventually (\skd = \botm)$.
In the original protocol, the steps of machine $\botm$ are different from all other machines (this difference is lost in the abstraction) and, in particular, they take the state of the network closer to stabilization.

\ifshort\else
\vspace{-0.6cm}
\begin{figure}
\centering
\lstset{style=mystyle}
\begin{minipage}{0.4\textwidth}
\begin{lstlisting}[language=custom]
machine bot
relation priv(machine)
function next(machine) : machine
while(exists m. priv(m)):
    skd = *
    assume(priv(skd))
    priv(skd) = false
    priv(next(skd)) = true
\end{lstlisting}
\vspace{-0.25cm}
\end{minipage}
\caption{Toy Stabilization \label{fig:toy_stab}}
\end{figure}

\vspace{-0.5cm}
\fi

Next, we describe a ranking function for proving the liveness property.
\ifshort For this,\else To present the ranking function \fi we denote the number of machines by $n$ and think of the machines as indexed by $0,1,\ldots,n-1$ according to their order in the ring, with $\botm=0$.
Now, a natural ranking function is:  $\rankname=\sum_{i\neq 0 : \priv(i)} n-i$.
Intuitively, the value $n-i$ can be seen as the number of steps required for the privilege of machine $i$ to reach machine $0$ (if no other privileges are present).
We can now verify that in any transition from $s$ to $\struct'$, if the eventuality does not occur in either $\struct$ or $\struct'$, the rank is reduced.
Indeed, for any transition, if $\skd = 0$ the eventuality is satisfied in $\struct$;
otherwise, $\skd$ loses its privilege, decreasing $\rankname$ by $n-\skd$, and
$\skd+1$ gains a privilege, (possibly) increasing $\rankname$ by $n-(\skd+1)$, ultimately decreasing $\rankname$ by $1$ (or more, if $\skd+1$ was already privileged).

\ifshort

\begin{figure}[t]
    \centering
    \begin{minipage}{0.45\textwidth}
        \centering
\lstset{style=mystyle,   showlines=true
}
\begin{lstlisting}[language=custom]
machine bot
relation priv(machine)
function next(machine) : machine
while(exists m. priv(m)):
    skd = *
    assume(priv(skd))
    priv(skd) = false
    priv(next(skd)) = true
            
\end{lstlisting}
\caption{Toy Stabilization \label{fig:toy_stab}}
    \end{minipage}\hfill
    \begin{minipage}{0.45\textwidth}
        \centering
\lstset{style=mystyle}
\begin{lstlisting}[language=custom]
index ptr = n - 1
array(index) a = [1 for i in 0,...,n-1]
while (ptr >= 0):
    if a[ptr] == 0:
        a[ptr] = 1
        ptr = ptr - 1
    else: 
        a[ptr] = 0
        ptr = n - 1
\end{lstlisting}
        \caption{Binary Counter \label{fig:binary_counter}}
    \end{minipage}
\vspace{-0.3cm}
\end{figure}

\else\fi

While this ranking function is natural, it is not clear how to reason about it with existing automated solvers  due to the unbounded sum operation which is not directly expressible in FOL (without induction).
Next, we show how we can express this ranking argument in FOL by an implicit ranking which encodes a sound approximation of the  reduction in the ranking function. To do this, we show that we can replace reasoning about set cardinalities and unbounded sums by reasoning about other aggregations that are expressible in FOL.

To avoid combining arithmetic reasoning with quantifiers, we use an uninterpreted sort to model the machines, and instead of relying on numbers, we model the ring structure with a strict order on machines, $\lt$, such that $\botm$ is minimal in that order, and the next field corresponds to the successor function in the order $\lt$, except for the maximal element which points to bot.%

Now, we observe that in our model, the expression $n-i$ used in the ranking function above for a privileged machine $i$ is the cardinality of the set of machines that are greater or equal to $i$ in the order $\lt$. 
Further, we can sum over all machines by pushing the condition on $i$ into the set definition. Thus, 
$\rankname= 
\sum_i |\{j : \priv(i) \wedge i\neq 0 \wedge (\lt(i,j)\vee i=j)\}|$.
In order to express reduction in the sum, %
we recall that in a transition, the summands are unchanged for all machines except for 
$\skd$, $\skd+1$; 
the contribution of $\skd+1$ to the sum after the transition is smaller than the contribution of $\skd$ before the transition and the contribution of $\skd$ to the sum after the transition is $0$ (because it loses its privilege) and therefore, smaller or equal to that of $\skd+1$ before the transition.
Therefore we can use an adaption of a pointwise argument which ``permutes'' $\skd$ and $\skd+1$ to capture the reduction in the sum over all machines.

Next, we need to express reduction (or conservation) in the summands (either of the same machine $i$ or of different machines) along transitions. The summands are cardinalities of sets of machines $j$. Thus, we observe that it suffices to show strict set inclusion.
This amounts to a pointwise argument over all machines $j$, showing reduction \ifshort\else(or conservation) \fi in the binary predicate $\formula(i,j)=\priv(i) \wedge i\neq 0 \wedge (\lt(i,j)\vee i=j)$that defines membership of $j$ in the set of $i$.
The reduction in the binary predicate, the pointwise reduction and the permuted-pointwise reduction, can all be encoded in FOL.
Thus, the overall approximation of the reduction in the ranking function can be expressed (and hence verified automatically) by the following first-order formula, where unprimed symbols represent the pre-state of a transition and primed variables represent the post-state: 
\ifshort\vspace{-0.1cm}\else\fi
\begin{align*}
    &\forall j.\ (\alpha'(\nextf(\skd),j)\to \alpha(\skd,j))\wedge 
    \exists j.\ ( \neg\alpha'(\nextf(\skd),j)\wedge   \alpha(\skd,j))\ \wedge  \\ 
    &\forall j. \ (\alpha'(\skd,j)\to \alpha(\nextf(\skd),j)) \
    \wedge \\
    & \forall i. \ ((i\neq \skd \wedge i\neq \nextf({\skd}) \to \forall j. \ (\alpha'(i,j)\to \alpha(i,j)) )
\end{align*}
\end{example}
\ifshort\vspace{-0.25cm}\else\fi

\begin{example}\label{example:binary_counter}
 As a second motivating example (\Cref{fig:binary_counter}), we look at a binary counter  implemented in an array, taken from \cite{runtime_squeezers}. For simplicity of the presentation, we consider a version of the counter that counts down.
The counter is implemented by an array $\arr[0],\ldots,\arr[n-1]$ with $\arr[0]$ as the most significant bit and $\arr[n-1]$ as the least significant bit, initialized with all $1$s. 
A pointer $\ptr$ traverses the array starting at index $n-1$ (the lsb) until it sees the first $1$, replacing at each step any $0$ it sees with a $1$. 
When reaching the first cell with a $1$, it sets that cell to a $0$ and returns $\ptr$ to index $n-1$. %
We wish to prove that eventually the array holds all $0$s: $\eventually (\forall x. \arr[x]=0)$.

\ifshort\else
\vspace{-0.6cm}
\begin{figure}
\centering %
\lstset{style=mystyle}
\begin{minipage}{0.45\textwidth} %
\begin{lstlisting}[language=custom]
index ptr = n - 1
array(index) a = [1 for i in 0,...,n-1]
while (ptr >= 0):
    if a[ptr] == 0:
        a[ptr] = 1
        ptr = ptr - 1
    else: 
        a[ptr] = 0
        ptr = n - 1
\end{lstlisting}
\vspace{-0.25cm}
\end{minipage}
\caption{Toy Stabilization \label{fig:binary_counter}}
\end{figure}

\vspace{-0.3cm}
\fi

To find a ranking function, we notice that the reduction in the counter value happens  between states of the program where $\ptr=n-1$. Such states partition the execution to ``intervals''. For a state inside an interval, we can derive the value of the counter at the beginning of the interval (i.e.,  when $\ptr$ was last $n-1$) by 
$\arr_{\mathsf{old}}[i]=0 \iff \arr[i]=0 \vee i > \ptr$.
Within an interval, 
the counter values stored by $\arr_{\mathsf{old}}$ do not change but $\ptr$ is reduced; and when a new interval starts, the value of $\arr_{\mathsf{old}}$ is reduced (and $\ptr$ is set back to $n-1$). 
This lends itself to a ranking function in the form of a lexicographic pair 
$\rankname = (\mathrm{val}(\arr_{\mathsf{old}}), \ptr)$. %

Unfortunately, it is not clear how to encode the value of the counter in FOL. In fact, most existing techniques for encoding ranking functions in FOL are limited to polynomial ranking functions, while the counter requires an exponential ranking function.
Fortunately, we realize that 
a reduction in the counter value corresponds to a lexicographic reduction in the sequence of bits stored in the array representing the counter based on the order on array indices.%

To encode the ranking argument in FOL we model the indices of the array, as in \Cref{example:toystab}, by an uninterpreted index sort and a strict order $\lt$ on it, 
with the maximal index acting as $n-1$.
The content of the array is modeled using a unary relation $\arr(\cdot)$ that records the cells with value $1$. Then, $\arr_{\mathsf{old}}$ is recorded by $\arr_{\mathsf{old}}(i) =\arr(i) \wedge (\lt(i,\ptr)\vee i=\ptr)$.
With this encoding, reduction in $\ptr$ is measured by the order on indices $\lt$ and reduction in $\arr_{\mathsf{old}}$ is measured lexicographically, based on the same order $\lt$ on indices, where for every index $i$, the values of cell $i$ are ordered by implication on the derived predicate $\arr_{\mathsf{old}}(i)$.
Overall, the reduction in rank can be encoded in FOL by the formula: 
\ifshort\vspace{-0.1cm}\else\fi
\begin{align*}
    & \exists i. \ ( \neg \arr_{\mathsf{old}}'(i) \wedge  \arr_{\mathsf{old}}(i)
    \wedge \forall j. \ \lt(j,i) \to ( \arr_{\mathsf{old}}'(j) \to  \arr_{\mathsf{old}}(j))) \ \vee\\
    & ((\forall i. \  \arr'_{\mathsf{old}}(i) \leftrightarrow  \arr_{\mathsf{old}}(i)) \wedge \lt(\ptr',\ptr)) 
\end{align*}\end{example}
\Cref{sec:ranking}  formally defines the notion of an implicit ranking which the  reduction formulas above exemplify.
The crux of our work is in \Cref{sec:constructions}, where we present  constructors of implicit rankings. As we show in \Cref{sec:evaluation}, the implicit rankings for both motivating examples can be built using our constructors.

\ifshort\vspace{-0.2cm}\else\fi
\section{Preliminaries} \label{sec:prelims}
\paragraph{First-Order Logic.} We consider uninterpreted FOL with equality, without theories.
For simplicity, we present our results for a single-sorted version of FOL.
The extension of the results to many-sorted logic, which we use in practice, is natural.
A first-order signature $\Sigma$ contains relation, constant and function symbols. A term $t$ over $\Sigma$ is a variable $x$, a constant $c$, the application of a function on a sequence of terms $f(\seq t)$ or an if-then-else term $\mathrm{ite}(\formula,t_1,t_2)$.
Formulas $\formula$ over $\Sigma$ are defined recursively: atomic formulas are either $t_1=t_2$ or $r(\seq t)$ where $r$ is a relation symbol.
Non-atomic formulas are built using connectives $\neg,\wedge,\vee,\to$ and quantifiers $\forall,\exists$.
We write  $\formula(\seq {x_1},\ldots,\seq {x_k})$ to denote that the free variables in $\formula$ are contained in $\seq{x_1}, \ldots,\seq {x_k}$.
Given sequences of terms $\seq {t_1},\ldots, \seq {t_k}$ we then use the notation $\formula(\seq {t_1},\ldots,\seq {t_k})$ to denote the formula obtained from $\formula$ by substituting each $\seq {x_j}$ with $\seq {t_j}$.
For a signature $\Sigma$ we use subscripts and superscripts $\ddag$ to denote disjoint copies of $\Sigma$ defined by $\Sigma\ddag = \{a\ddag \mid a\in \Sigma\}$, assumed to satisfy $\Sigma\cap \Sigma\ddag = \emptyset$. 
For a formula $\formula$ over $\Sigma$, we denote by $\formula\ddag$ the formula over $\Sigma\ddag$ obtained by substituting each symbol $a\in \Sigma$
 with $a\ddag\in \Sigma\ddag$.

First-order formulas are evaluated over pairs of structures and assignments. 
A structure for $\Sigma$ is a pair $\struct =(\domain,\interp)$ where $\domain$ is a non-empty set called the domain, and $\interp$ is an interpretation that maps each relation, constant and function symbol to an appropriate construct over $\domain$.
We denote by $\structset(\Sigma,\domain)$ the set of all structures over $\domain$.
An assignment $\assign$ from a sequence of variables $\seq x$ to a domain $\domain$ is a function $\assign: \seq x \to \domain$.
We denote the set of all assignments from $\seq x$ to $\domain$ by $\assignset(\seq x,\domain)$.
We denote the concatenation of two sequences of variables $\seq x$, $\seq y$ by $\seq x\concatvar \seq y$.
For two assignments $\otherassign,\assign$ without shared variables, we denote by $\otherassign\concatfunc\assign$ the disjoint union of $\otherassign$ and $\assign$.
We call a pair $(\struct,v)$ of a structure and an assignment an \pairname. For a domain $\domain$ and a sequence of variables $\seq x$ we denote the set of all {\pairnames} 
by 
$\pairset(\Sigma,\seq x, \domain)$.
For a formula $\formula$ and an \pairname \ $(\struct,\assign)$ we write $(\struct,\assign)\models \formula$ to denote that $(\struct,\assign)$ satisfies $\formula$.

\paragraph{Transition Systems in First-Order Logic.}
We consider transition systems given by a first-order specification $\Tspec = (\Sigma,\iota,\tau)$ where $\Sigma$ is a signature, %
$\iota$ is a closed formula over $\Sigma$ that specifies initial states, and $\tau$ is 
a formula over a double signature $\Sigma\uplus \Sigma'$ that specifies transitions, where the symbols in $\Sigma$ represent the pre-state and the symbols in $\Sigma'$ represent the post-state of a transition. 
A trace of $\Tspec$ is an infinite sequence of structures  $(\struct_i)_{i=0}^\infty$ over the same domain
such that $\struct_0$ is an initial state and for all $i\in \nat$, there is a transition from $\struct_i$ to $\struct_{i+1}$.

\paragraph{Well-Founded Partial Orders.}
A binary relation $\leq$ on a set $A$ is called a partial order if it is reflexive, antisymmetric and transitive.
For a partial order we write $a_1 < a_2$ as shorthand for $a_1\leq a_2$ and $a_1 \neq a_2$.
A partial order $\leq$ is called well-founded if there is no sequence $a_0,a_1,\ldots$ such that $a_{i+1} < a_i$ for all $i\in\nat$. We then refer to  
$\leq$ as a wfpo for short\footnote{This is a different (weaker) notion from a partial-well-order (an antisymmetric well-quasi-order), which requires no infinite decreasing chains and no infinite antichains.}. 
If $\leq$ is a partial order on $A$ and $A$ is finite, then $\leq$ is a wfpo.

\section{Expressing Ranking in First-Order Logic}\label{sec:ranking}

In this section we introduce the notion of \emph{implicit ranking}.
To motivate our definition, we start by considering the way rankings are typically used.

\ifshort\vspace{-0.2cm}\else\fi
\subsection{Using Ranking for Liveness Proofs}\label{subsec:using_ranking}

Well-founded rankings are useful for proving liveness properties of transition systems.
A typical proof rule that uses rankings is based on two notions: a conservation of the ranking, corresponding to $\leq$, and a reduction in the ranking, corresponding to $<$.
As an example, we examine proving liveness properties of the form
$\liveprop = \globally\eventually r \to \eventually q$. A transition system $\Tspec=(\Sigma,\iota,\tau)$ satisfies $\liveprop$ if every trace of $\Tspec$ that satisfies $r$ infinitely often eventually satisfies $q$ ($r$  can be understood as a fairness assumption). 
We can prove such a property by finding a ranking function $f$ from states of the system to a set $A$ with a wfpo $\leq$ and a formula $\triggerinv$, and validating the following premises:
\vspace{-0.2cm}
\begin{multicols}{2}
\begin{enumerate}
    \item $\iota\wedge \neg q \to \triggerinv$ \label{item:init}
    \item $\triggerinv \wedge  \tau \wedge \neg q' \to \triggerinv' $ \label{item:consec}
    \item $ \triggerinv \wedge \tau \wedge \neg q' \to \conserved$ \label{item:conserved}
    \item $\triggerinv \wedge \tau \wedge \neg q' \wedge r  \to \reduced$ \label{item:reduced}%
\end{enumerate}
\end{multicols}
\vspace{-0.2cm}
\noindent
where $\conserved,\reduced$ are formulas that encode conservation and reduction of $f$.
\Cref{item:init,item:consec} assert that $\triggerinv$ holds in all states in a trace as long as $q$ does not hold.
\Cref{item:conserved} guarantees that in every transition that does not visit $q$ we have a conservation of $f$. 
\Cref{item:reduced} states that in every transition following a visit to $r$ we have a reduction in $f$.
If all premises are valid, $r$ cannot be visited infinitely often without eventually satisfying $q$, as that would induce an infinitely decreasing chain in $f$\ifshort. \else, and so, $q$ must be reached after some finite number of visits to $r$. \fi
In \Cref{appendix:proof_rule}, we show a proof rule for more general liveness properties that we use in our evaluation, with similar premises to the rule above.

The structure of the proof rule above reveals that for soundness, $\conserved$ and $\reduced$ do not need to precisely capture conservation and reduction in the ranking. 
Instead, because $\conserved,\reduced$ appear only positively, it suffices that they \emph{underapproximate} them, such that whenever a pair of states satisfies $\conserved$, the value of $f$ is conserved between them, and whenever a pair of states satisfies $\reduced$, the value of $f$ is reduced between them.
The implication in the other direction is not needed for soundness.
This observation is key for encoding complex ranking arguments in FOL.
It allows flexibility in encoding conservation and reduction, which is crucial in cases where these relations (as well as the ranking function itself) are not directly expressible in FOL, but their underapproximations are.

\ifshort\vspace{-0.2cm}\else\fi
\subsection{Implicit Ranking}

\ifshort
This
\else
Motivated by the observation that it suffices to encode underapproximations of conservation and reduction in the ranking
this
\fi 
section formalizes the 
notion of an \emph{implicit ranking}\ifshort --- a 
\else. An implicit ranking is a
\fi
pair of formulas $\conserved,\reduced$ which, for every domain $\domain$, encode underapproximations of conservation and reduction of some implicitly defined ranking function on structures with domain $\domain$. For compositionality, we generalize this concept to pairs of {\pairnames}.

\ifshort \vspace{-0.1cm} \else\fi
\paragraph{Notation.}
The definition of an implicit ranking uses formulas that reason about a pair of {\pairnames}.
To that end, given a signature $\Sigma$ and a sequence of variables $\seq x = ({x_{i}})_{i=1}^m$, we consider a double signature $\Sigma_\low\uplus\Sigma_\high$
and two copies of the variables 
$\seq x_b = ({x_{b,i}})_{i=1}^m$ for $b \in \{\low, \high\}$. 
Intuitively, $\Sigma_\low$ and $\seq x_\low$ represent a ``lower ranked'' {\pairname} and $\Sigma_\high$ and $\seq x_\high$ represent a ``higher ranked'' {\pairname}.
For a term $t$ with variables $\seq x$, we denote by $t_b$ the term obtained by substituting $\Sigma_b$ for $\Sigma$ and $\seq{x_b}$ for $\seq {x}$.
With abuse of notation, we 
consider a structure $\struct = (\domain,\interp)$ for $\Sigma$ as a structure for $\Sigma_b$  where $\interp(a_b) = \interp(a)$ for every $a_b \in \Sigma_b$ and an assignment to $\seq x$ as an assignment to $\seq{x_b}$ by defining $\assign(x_{b,i}) = \assign(x_i)$.
For a formula $\varphi$ over $\Sigma_\low \uplus \Sigma_\high$ with free variables $\seq {x_\low},\seq {x_\high}$ and 
{\pairnames} $(\struct_\low,\assign_\low),(\struct_\high,\assign_\high)\in \pairset(\Sigma,\seq x,\domain)$, we write $(\struct_\low,\assign_\low),(\struct_\high,\assign_\high)\models \varphi$ to denote satisfaction of $\varphi$ when for $b \in \{0,1\}$, the interpretation of $\Sigma_b$ is taken from $\struct_b$, and the assignment to  $\seq {x_b}$ is taken from  $\assign_b$.

\begin{definition}[Implicit Ranking]\label{def:implicit_ranking}
    Let $\conserved,{\reduced}$ be two formulas over  $\Sigma_\low\uplus\Sigma_\high$. We say that $\rankname=({\conserved},{\reduced})$ 
    is an \emph{implicit ranking} with \parameters \ $\seq x$,
    if
    the free variables in $\conserved$ and $\reduced$ are confined to ${\seq x_\low},\seq {x_\high}$
    and 
    for any domain $\domain$, there exist a set $A$, a wfpo $\leq$ on $A$ and a function $f: \pairset(\Sigma,\seq x,\domain) \to A$, such that:
    \begin{align*}
        &(\struct_\low,\assign_\low),(\struct_\high,\assign_\high)\models \conserved(\seq {x_\low},\seq {x_\post}) \implies f(\struct_\low,\assign_\low) \leq f(\struct_\high,\assign_\high)\\
        &(\struct_\low,\assign_\low),(\struct_\high,\assign_\high)\models \reduced(\seq {x_\low},\seq {x_\post}) \implies f(\struct_\low,\assign_\low) < f(\struct_\high,\assign_\high)
    \end{align*}
    For an implicit ranking $\rankname$ and a domain $\domain$, we call 
    $(A,\leq)$ a \emph{ranking range} for $\domain$ and $f$ a \emph{ranking function} for $\domain$.
    If $\seq x$ is empty, we call $\rankname$ a \emph{closed} implicit ranking.
    If the existence of $A,\leq$ and $f$ is ensured only for finite domains $\domain$, we call $\rankname$ an \emph{implicit ranking for finite domains}.
\end{definition}
The formulas $\conserved,\reduced$ encode  relations between {\pairname}s, which means that they let us compare specific elements in one state with possibly different elements in  another state. Technically, this is achieved by the free variables in $\conserved$ and $\reduced$. 
In proof rules, \ifshort \else such as the one above, \fi we only need to compare structures and \ifshort thus \else as such we \fi only allow the use of closed implicit rankings.
The use of free variables is needed  for the construction of these rankings. 
In \Cref{sec:constructions} we introduce constructors that create new implicit rankings from existing rankings, some of which 
eliminate free variables.
\ifshort\else Our use of parameters in implicit rankings is analogous to the fact that FOL formulas may have free variables but in a proof we use only closed formulas (sentences).\fi

\Cref{def:implicit_ranking} ensures that for each domain $\domain$, the formulas $\conserved$ and $\reduced$ underapproximate conservation and reduction of 
some ranking function over {\pairnames}.
The ranking function $f$ and the corresponding ranking range are not explicitly encoded. 
As explained above, this allows us to find such formulas even if the exact conservation and reduction of $f$ are not readily expressible in FOL.
For example, for a function $f$ that maps $\struct$ to the number of elements in $\domain$ that satisfy a predicate $\alpha(x)$, there is no formula that precisely captures the conservation of $f$, but it can be approximated, say, by the formula $\forall x. \alpha_\low(x)\to \alpha_\high(x)$.

Closed implicit rankings can be used in proof rules such as the one above by substituting $\Sigma'$ (the post-state signature) for $\Sigma_\low$ and $ \Sigma$ (the pre-state signature) for $\Sigma_\high$.
While implicit rankings ensure reduction/conservation in a possibly different function for each domain, they are sound to use as all states along a trace of a transition system share a domain.
In the case of implicit rankings for finite domains, soundness is guaranteed for systems where the domain is finite in each trace, but still unbounded.
In \Cref{appendix:definition} we discuss alternative definitions of implicit rankings and compare them to \Cref{def:implicit_ranking}.

\section{Constructions of Implicit Rankings}\label{sec:constructions}

We now introduce several constructors for implicit rankings, which can be instantiated and composed to create a rich family of implicit rankings. 
Each constructor has its own arguments, which may themselves be implicit rankings.
Some of the constructors only construct implicit rankings for finite domains. These are called finite-domain constructors.
For other constructors, whether the constructed implicit ranking is for arbitrary or finite domains is inherited from the implicit rankings used as arguments for the constructor. 
We provide the following guarantees.
\begin{theorem}
All constructors defined in this section are \emph{sound} in the sense that (i)~if the arguments of a constructor %
satisfy their assumptions, then the output of the constructor is an implicit ranking (for finite domains if the constructor is finite-domain), 
and (ii)~for constructors that receive implicit rankings as arguments, if at least one of the arguments is an implicit ranking for finite domains, so is the constructed implicit ranking.
\end{theorem}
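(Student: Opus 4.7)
The plan is to prove the theorem by case analysis, handling each constructor individually. Since the constructors have not yet been introduced at this point in the paper, the proof should really be read as a template: for each specific constructor introduced in \Cref{sec:constructions}, we will produce the required objects and verify the two clauses.

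Concretely, for a given constructor, I will proceed as follows. Fix a domain $\domain$ (finite, if the constructor is finite-domain, or arbitrary otherwise). For each argument that is itself an implicit ranking $\rankname_i = (\conservedsuper{i}, \reducedsuper{i})$, invoke \Cref{def:implicit_ranking} to obtain a ranking range $(A_i, \leq_i)$ and a ranking function $f_i : \pairset(\Sigma,\seq{x_i},\domain) \to A_i$. From these I build a ranking range $(A, \leq)$ for the constructor's output using the appropriate order-theoretic lifting (e.g.\ disjoint sums, product orders, lexicographic orders, or pointwise orders over $\domain$), and a ranking function $f$ by combining the $f_i$ in the matching way (e.g.\ tupling, indexing by assignments to the bound parameters, or composition with an auxiliary term interpretation). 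I then check two things: (a) well-foundedness of $\leq$, which follows from well-foundedness of the $\leq_i$ in the simple compositional cases, and for domain-based constructors follows because any infinite descending chain would force an infinite descending chain in a component $\leq_i$, using finiteness of $\domain$ to restrict the quantified index to finitely many positions; and (b) that whenever $\twopair \models \conserved$ (resp.\ $\reduced$) we have $f(\pairlow) \leq f(\pairhigh)$ (resp.\ $<$). Clause (b) is where the specific syntactic shape of $\conserved$ and $\reduced$ is used, translating each quantifier in the formula to a corresponding quantification over components of $A$ and invoking the inductive guarantees on the arguments' conservation/reduction.

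For part (ii) of the theorem, I observe that the above construction of $f$ uses the argument $f_i$'s as black boxes on the same domain $\domain$; hence if any $f_i$ is only guaranteed to exist for finite $\domain$, so is $f$. This is immediate once part (i) has been verified constructor by constructor, so no additional work beyond bookkeeping is required.

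The main obstacle will be the domain-based constructors, where the ranking range involves an aggregation indexed by the elements of $\domain$ (for instance, a lexicographic product ordered by an implicit linear order on $\domain$, or a pointwise product over $\domain$). Here two subtleties arise simultaneously: first, well-foundedness relies essentially on $\domain$ being finite, because infinite lexicographic products over a linear order are not well-founded in general; second, the underapproximation encoded in $\conserved$ and $\reduced$ uses quantifiers over $\domain$ to mimic the index-by-index comparison, and one must argue carefully that a first-order witness to the quantifier pattern indeed corresponds to a witness of strict decrease in the aggregated order. For these cases I will make the correspondence explicit by exhibiting, given a satisfying pair of {\pairnames}, a bijection (or suitable permutation, as in the ``permuted-pointwise'' pattern of \Cref{example:toystab}) of $\domain$ that realizes the decrease in $(A,\leq)$, and then invoke well-foundedness of the coordinate orders together with finiteness of $\domain$ to conclude.
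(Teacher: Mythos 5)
Your proposal matches the paper's proof strategy exactly: a per-constructor case analysis in which each output ranking range and ranking function is assembled from those of the arguments via the corresponding order-theoretic lifting (products with pointwise/lexicographic orders, disjoint sums, and function spaces over $\assignset(\seq y,\domain)$ for the domain-based constructors), with finiteness of $\domain$ invoked only to establish well-foundedness of the lifted orders. The one place where your sketch understates the work is the well-foundedness of the permuted-pointwise and domain-lexicographic orders on function spaces, which in the paper require standalone pigeonhole-style arguments tracking elements along the chain of permutations (resp.\ minimal indices of infinitely-often-decreasing positions); your high-level reduction to ``an infinite descending chain in a component'' is the right idea but would need to be fleshed out along those lines.
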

Our constructors essentially encode standard constructions of partial orders (or their under-approximations) in FOL, accompanied by suitable definitions of ranking functions. \ifshort Due to space constraints we defer \else We defer \fi the definition of two constructors which are less central to our work to \Cref{appendix:linsums}. 
We further defer 
the soundness proofs to \Cref{appendix:proofs}\ifshort.\else \ and give only the crux of the proofs, which is the definition of ranking range and a ranking function for the constructed implicit ranking given corresponding notions for the arguments of the constructor. \fi 
In \Cref{appendix:heights} we present intuitive (but less general) constructions of ranking functions for our constructors.

\subsection{Base Constructors}\label{subsec:base_constructors}

We start by introducing two non-recursive constructors, which are used as the base in recursive constructions of implicit rankings. 

\paragraph{\bf Binary Constructor.}
The first constructor we define 
captures \ifshort\else the simplest possible ranking function --- \fi a binary ranking function, mapping each \pairname \ $(\struct,\assign)$ to $0$ or $1$ (ordered $0 \leq 1$) by checking whether $(\struct,\assign)$ satisfies a formula $\formula$. In this case, reduction \ifshort\else in the ranking \fi is obtained if $\formula$ holds in the higher-ranked \pairname \ and does not hold in the lower-ranked \pairname.
Conservation also includes the case where $\formula$ holds in both or in neither.

\begin{constructor}\label{const:Bin}
The \emph{binary constructor} \ifshort receives \else receives as input \fi a formula $\formula(\seq x)$ over $\Sigma$. It returns an implicit ranking $\mathrm{Bin}(\formula)=(\conserved,\reduced)$ with parameters $\seq x$ defined by:\begin{center}
    $ \conserved(\seq {x_\low},\seq {x_\high}) = \formula_\low(\seq {x_\low}) \to \formula_\high(\seq {x_\high}) \qquad
    {\reduced}(\seq {x_\low},\seq {x_\high}) = \formula_\high(\seq {x_\high}) \wedge \neg \formula_\low(\seq {x_\low})
    $
\end{center}
\end{constructor}

\begin{example}\label{ex:Bin}
    Continuing with \Cref{example:toystab}, for 
    $\formula(i,j)=\mathrm{priv}(i) \wedge \mathrm{lt}(i,j)$, 
    $\mathrm{Bin} (\formula)$
    is an implicit ranking \ifshort \else $(\conserved,\reduced)$ \fi with parameters $i,j$\ifshort, capturing \else.  
    The formulas $\conserved,\reduced$ capture \fi  reduction and conservation in the ranking between a pair of machines $(i_\high,j_\high)$ in the higher-ranked structure and a potentially different pair $(i_\low,j_\low)$ in the lower-ranked structure.
    The need for comparing different pairs in different structures is demonstrated in \Cref{example:toystab}.   
\end{example}

\paragraph{\bf Position-in-Order Constructor.}
The second base constructor, which is a finite-domain constructor, utilizes an already existing (possibly derived) partial order
in the system itself, 
building on the observation that a partial order over a \emph{finite} domain is always well-founded. 
The order is defined by a single signature formula $\orderformula(\seq {y_\low},\seq {y_\high})$, which allows comparing two (tuples of) elements. %
To guarantee soundness we must verify that $\orderformula(\seq {y_\low},\seq {y_\high})$ defines a strict order, and that it is immutable, i.e., does not change between the ranked structures\ifshort , encoded by: \else. These requirements are encoded by the closed formula:\fi
\begin{align*}
&\immutorder(\orderformula) := \ 
(\forall \seq {y^1},\seq {y^2}. \orderformula_\low(\seq {y^1},\seq {y^2})\leftrightarrow \orderformula_\high (\seq {y^1},\seq {y^2})) \ \wedge\\
&\quad (\forall \seq {y^1},\seq {y^2}. \orderformula_\low(\seq {y^1},\seq {y^2})\to \neg \orderformula_\low(\seq {y^2},\seq {y^1})) \wedge
(\forall \seq {y^1},\seq {y^2},\seq {y^3} . \orderformula_\low(\seq {y^1},\seq {y^2}) \wedge \orderformula_\low(\seq  {y^2},\seq {y^3}) \to \orderformula_\low(\seq {y^1}, \seq {y^3}))
\end{align*}
If $\struct_\low,\struct_\high 
\models \immutorder(\orderformula)$ the interpretation of $\orderformula$ in $\struct_\low$ coincides with its interpretation in $\struct_\high$ and both define a strict partial order on the set $\assignset(\seq y,\domain)$, where $\domain$ is the domain of $\struct_\low$ and $\struct_\high$.
Then, if $\domain$ is finite, this order is a wfpo.
The position-in-order constructor uses $\orderformula$ to create an implicit ranking that compares the valuation of the same term (or sequence of terms) $\seq t$ in the two structures. The term may include free variables, allowing to compare valuations under 
different assignments in the two structures.
\ifshort
The underlying ranking function maps an {\pairname} to its interpretation of $\orderformula$ together with the valuation it assigns to $\seq t$.
\else\fi

\begin{constructor}\label{const:Pos}
The \emph{position-in-order constructor} \ifshort receives \else receives as input \fi a formula $\orderformula(\seq {y_\low},\seq {y_\high})$ over $\Sigma$ and a sequence of terms $\seq t(\seq y)$ over $\Sigma$ with $|\seq t|=|\seq y|$.
It returns an implicit ranking for finite domains $\mathrm{Pos}(\seq t,\orderformula)=(\conserved,\reduced)$ with parameters $\seq y$ defined by:
$$\textstyle
{\conserved}(\seq {y_\low} ,\seq {y_\high}) = \immutorder(\orderformula) \ \wedge \ (\orderformula_\low(\seq {t_\low},\seq {t_\high}) 
\ \vee \ \seq {t_\low} = \seq {t_\high}) \qquad
    {\reduced}(\seq {y_\low} ,\seq {y_\high}) = \immutorder(\orderformula) \ \wedge \  \orderformula_\low(\seq {t_\low},\seq {t_\high})
$$
\end{constructor}
\ifshort 
\else
The definition of a ranking function  that justifies soundness of the constructed implicit ranking for a given domain $\domain$ is somewhat subtle since the order induced by $\orderformula$ depends on the specific structure considered and not just its domain.
Thus, we include the interpretation of $\orderformula$ in the ranking range.
That is, we define $A = \mathrm{interp}(\orderformula,\domain) \times \assignset(\seq y,\domain)$, where $\mathrm{interp}(\orderformula,\domain)$ denotes the set of all possible interpretations of $\orderformula$ in structures with domain $\domain$.
An interpretation $\interp^\orderformula$ of $\orderformula$ on $\domain$ is a subset of $\assignset(\seq {y_\low} \concatvar \seq {y_\high},\domain)$. 
The order $\leq$ on $A$ is defined by: $(\interp_\low^\orderformula,\assign_\low) \leq (\interp_\high^\orderformula,\assign_\high)$ if $\interp_\low^\orderformula=\interp_\high^\orderformula$ and $\assign_\low \concatfunc \assign_\high \in \interp_\low^\orderformula$ or $\assign_\low = \assign_\high$.
For a structure $\struct$, the interpretation $\interp^\orderformula(\struct)$ of $\orderformula$ is the set of assignments $\assign_\low \concatfunc \assign_\high \in  \assignset(\seq {y_\low} \concatvar \seq {y_\high},\domain)$ such that $\struct,\assign_\low \concatfunc \assign_\high \models \orderformula$.
A ranking function to $(A,\leq)$ is then defined by $f(\struct,\assign)=(\interp^\orderformula(\struct),\assign^{\seq t})$ where $\assign^{\seq t}$ is an assignment to $\seq y$ that takes its value based on the values $\assign$ gives to $\seq t$.
\fi

\begin{example}\label{ex:Pos}
    Continuing with \Cref{example:binary_counter}, define $\orderformula(i_\low,i_\high) = \mathrm{lt}(i_\low,i_\high)$, and $t = \mathrm{ptr}$. Then $\mathrm{Pos}(t,\orderformula)$ is a closed implicit ranking for finite domains, where the   ranking is based on the position of the pointer in the different structures.
\end{example}

\subsection{Constructors for Aggregation of Finitely Many Rankings}\label{subsec:finite_aggregations}

We now present constructors which receive as input a finite number of implicit rankings and create an implicit ranking that mimics aggregation of the underlying ranking ranges and functions. 
These constructors are based on standard ways to lift partial orders on sets to a partial order on their Cartesian product.

\paragraph{\bf Pointwise Constructor.}
For partially-ordered sets $A_1,\ldots,A_m$ with partial orders $\leq_1,\ldots,\leq_m$ respectively, the pointwise partial order 
$\leq_{\text{pw}}$ on the set $A_1\times\cdots\times A_m$ is defined by 
$(a_1,\ldots,a_m)\leq_{\text{pw}} (b_1,\ldots,b_m) \iff \bigwedge_i a_i\leq_i b_i$.
If the orders $\leq_1,\ldots,\leq_m$ are wfpos then so is $\leq_{\text{pw}}$.
The following constructor 
uses a FOL encoding of a pointwise partial order to aggregate rankings. 
\ifshort A corresponding ranking function to $A_1 \times \cdots \times A_m$ is defined by $f(\struct,\assign) = (f_1(\struct,\assign),\ldots,f_m(\struct,\assign))$, where $f_i$ is a ranking function to $A_i$.\else\fi

\begin{constructor} \label{const:PW} 
The \emph{pointwise constructor} \ifshort receives \else receives as input \fi
implicit rankings $\rankname^i=(\conservedsuper{i},\reducedsuper{i})$ 
for $i=1,\ldots,m$, each with parameters $\seq x$.
It returns an implicit ranking $\mathrm{PW}(\rankname^1,\ldots,\rankname^m) = ({\conserved},{\reduced})$ with parameters $\seq x$ defined by:
\begin{center}
    $
{\conserved}(\seq {x_\low},\seq {x_\high}) = \bigwedge_i \conservedsuper{i}(\seq {x_\low},\seq {x_\high})
\qquad
{\reduced}(\seq {x_\low},\seq {x_\high}) = {\conserved}(\seq {x_\low},\seq {x_\high}) \wedge \bigvee_i \reducedsuper{i}(\seq {x_\low},\seq {x_\high})
$
\end{center}

\end{constructor}
\ifshort\else
Given implicit rankings $\rankname^1,\ldots,\rankname^m$,
soundness of $\rankname = \mathrm{PW}(\rankname^1,\ldots,\rankname^m)$ holds since for every domain $\domain$, 
given a ranking range $(A_i,\leq_i)$ and a ranking function $f_i$ of each $\rankname_i$, 
a ranking range of $\rankname$ for $\domain$ can be defined by $(A_1\times \cdots \times A_m, \leq_{\text{pw}})$
and a ranking function can be defined by $f(\struct,\assign) = (f_1(\struct,\assign),\ldots,f_m(\struct,\assign))$.\fi

\paragraph{\bf Lexicographic Constructor.}
A different partial order on $A=A_1\times \cdots\times A_m$ can be defined by the lexicographic ordering: 
$(a_1,\ldots,a_m)\leq_{\text{lex}} (b_1,\ldots,b_m) \iff \bigvee_i (a_i <_i b_i \wedge \bigwedge_{j<i} (a_j \leq_j b_j)) \vee (\bigwedge_i a_i \leq_i b_i)$.
Again, if $\leq_i$ are all wfpos, then so is $\leq_{\text{lex}}$. 
The Lexicographic Constructor encodes this idea in FOL.
\ifshort
A corresponding ranking function is defined as for the pointwise constructor.\else\fi\

\begin{constructor} \label{const:Lex}
The \emph{lexicographic constructor} \ifshort receives \else receives as input \fi
implicit rankings $\rankname^i=(\conservedsuper{i},\reducedsuper{i})$ 
for $i=1,\ldots,m$, each with parameters $\seq x$.
It returns an implicit ranking $\mathrm{Lex}(\rankname^1,\ldots,\rankname^m) = ({\conserved},{\reduced})$ with parameters $\seq x$ defined by:
\ifshort$$\textstyle
    {\reduced}(\seq {x_\low},\seq {x_\high})= \bigvee_i
\left(\reducedsuper{i}(\seq {x_\low},\seq {x_\high}) \wedge \bigwedge_{j<i} 
 \conservedsuper{j}(\seq {x_\low},\seq {x_\high})
\right)\qquad
\conserved(\seq {x_\low},\seq {x_\high})  ={\reduced}(\seq {x_\low},\seq {x_\high}) \vee \bigwedge_i \conservedsuper{i}(\seq {x_\low},\seq {x_\high}) 
$$
\else
\begin{center}
   \begin{tabular}{rl}
    ${\reduced}(\seq {x_\low},\seq {x_\high})$&$= \bigvee_i
\left(\reducedsuper{i}(\seq {x_\low},\seq {x_\high}) \wedge \bigwedge_{j<i} 
 \conservedsuper{j}(\seq {x_\low},\seq {x_\high})
\right)$ \\
$\conserved(\seq {x_\low},\seq {x_\high})  $&$={\reduced}(\seq {x_\low},\seq {x_\high}) \vee \bigwedge_i \conservedsuper{i}(\seq {x_\low},\seq {x_\high}) $
\end{tabular}
\end{center}
\fi
\end{constructor}
\ifshort
\else
The soundness of the Lexicographic Constructor is obtained using the same ranking range and ranking function as for the pointwise constructor, except that the order on $A_1\times \cdots\times A_m$ is $\leq_{\text{lex}}$.
\fi

\subsection{Constructors for Domain-Based Aggregations of Rankings}\label{subsec:domain_based}
We now turn to present constructors that encode domain-based aggregations of rankings. 
As demonstrated in \Cref{example:toystab},
such aggregations can sometimes replace 
summations and cardinalities of sets in ranking arguments.
These are constructors that receive as input an implicit ranking $(\conserved,\reduced)$ with parameters $\seq x$, partitioned to $\seq x = \seq y \concatvar \seq z$. and return an implicit ranking with parameters $\seq z$, for which the conservation and reduction depend on the aggregation of the results of $\conserved$ and $\reduced$ for different values of $\seq y$ in the domain.
This is established by quantifying over $\seq y$.
To do so, we use generalizations of the partial-order constructions we saw in \Cref{subsec:finite_aggregations} to sets of functions. 
Due to the nature of these constructions, the soundness of the produced rankings depends on the finiteness of the domain, so these constructors are \emph{finite-domain constructors}.

\paragraph{\bf Domain-Pointwise Constructor.}
For a partially-ordered set $(A, \leq_A)$ %
and a set $Y$, 
we can lift $\leq_A$ to a pointwise partial order on
the set of functions $Y\to A$:
for any $a,b\in Y\to A$ we have
$a\leq_{\text{pw}} b \iff \forall y\in Y. a(y)\leq_A b(y)$.
If the order $\leq_A$ is a wfpo and $Y$ is finite then $\leq_{\text{pw}}$ is a wfpo.
In our case the set $Y$ is the set of assignments to $\seq y$ over some finite domain, and the next constructor encodes this idea in FOL in a straight-forward way: conservation of the aggregated ranking occurs when we have conservation of the input ranking for all values of $\seq y$, and reduction is achieved when, additionally, there is some value of $\seq y$ for which we have reduction of the input rank.
\ifshort A corresponding ranking function to $\assignset(\seq y,\domain)\to A$ is defined by $f(\struct,\assign)=\lambda \otherassign\in \assignset(\seq y, \domain). f_A(\struct,\otherassign\concatfunc\assign)$ where $f_A$ is a ranking function to $A$.\else\fi

\begin{constructor} 
\label{const:DomPW}
The \emph{domain-pointwise constructor} \ifshort receives \else receives as input \fi an
implicit ranking $\rankname^\insup=(\conservedsuper{\insup},\reducedsuper{\insup})$
with parameters $\seq x = \seq y\concatvar\seq z$.
It returns an implicit ranking for finite domains $\mathrm{DomPW}(\rankname^{\insup},\seq y) = ({\conserved},{\reduced})$ with parameters $\seq z$ defined by:
\ifshort
$$
{\conserved}(\seq {z_\low},\seq {z_\high}) = \forall {\seq y}. \conservedsuper{\insup}(\seq y \concatvar \seq {z_\low}, \seq y \concatvar \seq {z_\high}) \qquad
{\reduced}(\seq {z_\low},\seq {z_\high}) = {\conserved}(\seq {z_\low}, \seq {z_\high}) \wedge
(\exists \seq y. {\reducedsuper{\insup}}( \seq y \concatvar \seq {z_\low}, \seq y \concatvar \seq {z_\high} ))  
$$
\else
\begin{center}
\begin{tabular}{ll}
&${\conserved}(\seq {z_\low},\seq {z_\high}) = \forall {\seq y}. \conservedsuper{\insup}(\seq y \concatvar \seq {z_\low}, \seq y \concatvar \seq {z_\high})$\\
&${\reduced}(\seq {z_\low},\seq {z_\high}) = {\conserved}(\seq {z_\low}, \seq {z_\high}) \wedge
(\exists \seq y. {\reducedsuper{\insup}}( \seq y \concatvar \seq {z_\low}, \seq y \concatvar \seq {z_\high} ))  $
\end{tabular}    
\end{center}

\fi

\end{constructor}
\ifshort\else
To establish the soundness of $\rankname=\mathrm{DomPW}(\rankname^{\insup},\seq y)$, consider a domain $\domain$ and let $(A_\insup,\leq_\insup)$ and $f_\insup$ be a ranking range for $\rankname^\insup$. 
We define a ranking range for $\rankname$ by $(\assignset(\seq y,\domain)\to A_\insup,\leq_{\text{pw}})$, and a ranking function by $f(\struct,\assign)=\lambda \otherassign\in \assignset(\seq y, \domain). f_\insup(\struct,\otherassign\concatfunc\assign)$.
\fi
Next we demonstrate how DomPW can be used to approximate set cardinalities.%
\begin{example}\label{ex:DomPW}
    For a formula with one free variable $\formula(x)$, if we take $\rankname^{\insup}=\mathrm{Bin}(\alpha)$, the implicit ranking for finite domains $\rankname = \mathrm{DomPW}(\rankname^{\insup},x)=(\conserved,\reduced)$
    \ifshort approximates \else can be seen to 
    approximate \fi the cardinality of the set of elements that satisfy $\formula$:
    if \ifshort \else two \fi structures $\struct_\low,\struct_\high$ are such that $\struct_\low,\struct_\high\models \reduced$, we have that the set of elements that satisfy $\alpha$ in $\struct_\low$ is a strict subset of that in $\struct_\high$, and so we have reduction in cardinality.
    This does not capture cardinality precisely, and we can encode a more precise approximation using \Cref{const:DomPerm}.

\end{example}
\begin{example}
    \label{ex:SetFreeVar}
    Taking $\rankname^{\insup} = \mathrm{Bin}(\mathrm{priv}(i)\wedge \mathrm{lt}(i,j))$ from \Cref{ex:Bin}, the implicit ranking for finite domains $\mathrm{DomPW}(\rankname^\insup,j)=(\conserved,\reduced)$ aggregates over $j$ but still has $i$ as a parameter. This lets us compare cardinalities of the sets of machines associated with two different machines described by $i_\low$ and $i_\high$.
\end{example}

\paragraph{\bf Domain Permuted-Pointwise Constructor.}
The following constructor is a relaxation of \Cref{const:DomPW}, based on the notion of a permuted pointwise order, meant to  
capture cases where two functions are almost pointwise-ordered but some permutation of the inputs is required.
For a set $Y$, a partially-ordered set $A$, and two functions $a,b\in Y\to A$, we say that $a \leq_{\text{perm}} b$ if there exists a bijection $\bijec$ on $Y$ such that $a \leq_{\text{pw}} b\circ \bijec$. 
The result,  
$\leq_{\text{perm}}$, is a \emph{preorder} on $Y\to A$. It induces a partial order on the quotient set of $Y\to A$ w.r.t.\  $\equiv_{\text{perm}}$ in the usual way. If $(A,\leq_A)$ is a wfpo and $Y$ is finite, $\leq_{\text{perm}}$ is a wfpo on the quotient set of $Y\to A$ (see \Cref{appendix:proofs} for more details).
\ifshort In our case $Y = \assignset(\seq y, \domain)$ and a ranking function $f_A$ to $A$ can be lifted to a ranking function to the quotient set of $\assignset(\seq y, \domain) \to A$ by defining $f(\struct,\assign)= [\lambda \otherassign\in \assignset(\seq y, \domain). f_A(\struct,\otherassign\concatfunc\assign)]_{\equiv_{\text{perm}}}$. \else\fi

The above order is not directly first-order definable --- we cannot capture the existence of a permutation $\bijec$ as this would require second order quantification.
Instead, the permuted-pointwise constructor under-approximates and encodes only the cases where $\bijec$ is 
composed of transpositions of at most a constant number $k$ of pairs of elements $\seq {y^1_{\scriptstyle\rightarrow}},\seq {y^1_{\scriptstyle\leftarrow}},\ldots,\seq {y^k_{{\scriptstyle\rightarrow }}},\seq {y^k_{\scriptstyle\leftarrow}}$ as follows: $\bijec(\seq {y^i_{\scriptstyle\rightarrow}})=\seq {y^i_{\scriptstyle\leftarrow}}$
and $\bijec(\seq {y^i_{\scriptstyle\leftarrow}})=\seq {y^i_{\scriptstyle\rightarrow}}$ for $i=1,\ldots,k$ and $\bijec(\seq y)=\seq y$ for any other $\seq y$.
To ensure that $\bijec$ is a well-defined permutation we require that for every $i\neq j$ we have ${\seq y^i_{\scriptstyle\rightarrow}}\neq {\seq y^j_{\scriptstyle\rightarrow}}$,${\seq y^i_{\scriptstyle\leftarrow}}\neq {\seq y^j_{\scriptstyle\leftarrow}}
     $ and $
     {\seq y^i_{\scriptstyle\rightarrow}}\neq {\seq y^j_{\scriptstyle\leftarrow}}$.
We can then capture $\bijec(\seq y)$ for any $\seq y$ by a term $\seq y_\bijec$ (see below), and encode reduction or conservation according to $\bijec$ by comparing $\seq y$ to $\seq y_\bijec$ with the input implicit ranking. (Notably, this is the only constructor that uses the input implicit ranking to compare different elements.) 
While this is only an approximation,
we have found it captures several interesting cases, such as the one needed to verify \Cref{example:toystab}.
\begin{constructor} \label{const:DomPerm}
The \emph{domain permuted-pointwise constructor} \ifshort receives \else receives as input \fi an
implicit ranking $\rankname^\insup=(\conservedsuper{\insup},\reducedsuper{\insup})$
with parameters $\seq x = \seq y\concatvar\seq z$, and $k\in\nat$.
It returns an implicit ranking for finite domains $\mathrm{DomPerm}(\rankname^\insup,\seq y,k)=({\conserved},{\reduced})$ with parameters $\seq z$ defined by:
\begin{center}
\begin{tabular}{l}
     $\conserved(\seq {z_\low},\seq {z_\high}) = \tilde \exists\bijec. \
    \forall \seq {y}. \ \conservedsuper{\insup}(\seq y \concatvar \seq {z_\low}, \seq {y_\bijec} \concatvar \seq {z_\high} )$\\
    $ \reduced(\seq {z_\low},\seq {z_\high}) = \tilde \exists\bijec. \ \forall \seq {y}. \ \conservedsuper{\insup}(\seq y \concatvar \seq {z_\low}, \seq {y_\bijec} \concatvar \seq {z_\high} ) \wedge 
    \exists \seq {y}. \ \reducedsuper{\insup}(\seq y \concatvar \seq {z_\low}, \seq {y_\bijec} \concatvar \seq {z_\high} )$
\end{tabular}
\end{center}
where:\\[8pt]
\begin{tabular}{l}
     $\mathrm{ \tilde \exists\bijec. \ \formula } := \exists \seq {y^1_{\scriptstyle\rightarrow}},\seq {y^1_{\scriptstyle\leftarrow}},\ldots,\seq {y^k_{\scriptstyle\rightarrow}},\seq {y^k_{\scriptstyle\leftarrow}}. \bigwedge_{1\leq i < j\leq k} ({\seq y^i_{\scriptstyle\rightarrow}}\neq {\seq y^j_{\scriptstyle\rightarrow}}\wedge {\seq y^i_{\scriptstyle\leftarrow}}\neq {\seq y^j_{\scriptstyle\leftarrow}}
     \wedge
     {\seq y^i_{\scriptstyle\rightarrow}}\neq {\seq y^j_{\scriptstyle\leftarrow}}

     )\wedge \formula$ \\
     $\seq y_\bijec =  \mathrm{ite}(\seq y = \seq {y^1_{\scriptstyle\rightarrow}},\
\seq {y^1_{\scriptstyle\leftarrow}}, \
\mathrm{ite}(\seq y = \seq {y^1_{\scriptstyle\leftarrow}},\
\seq {y^1_{\scriptstyle\rightarrow}} , \ 
\ldots, \ 
\mathrm{ite}(\seq y = \seq {y^k_{\scriptstyle\rightarrow}}, \ \seq {y^k_{\scriptstyle\leftarrow}}, \ \mathrm{ite}(\seq y = \seq {y^k_{\scriptstyle\leftarrow}}, \ \seq {y^k_{\scriptstyle\rightarrow}} , \ \seq y))))$
\end{tabular}
\end{constructor}
\ifshort\else
To establish the soundness of the domain permuted-pointwise constructor, we take the same construction of ranking function defined for \Cref{const:DomPerm}, modulo an equivalence relation of functions up to permutation.
\fi
Next, we demonstrate how we can use DomPerm to approximate sums over unbounded sets and weighted set cardinalities. 
We expand on the relation to summation in \Cref{appendix:heights}.

\begin{example}
    \label{ex:DomPerm}
    Continuing with \Cref{ex:SetFreeVar}, taking $\rankname^{\insup} = \mathrm{DomPW}(\mathrm{Bin}(\mathrm{priv}(i)\wedge \mathrm{lt}(i,j)),j)$,
    $\rankname = \mathrm{DomPerm}(\rankname^{\insup},i,1)$  
    is an implicit ranking that captures the reduction argument  described in \Cref{example:toystab}. 
    In particular, the aggregation captured by the permuted-pointwise constructor replaces an unbounded summation. 
    \ifshort 
The reduction formula listed in \Cref{example:toystab}  is slightly simplified
compared to $\rankname$ produced by  $\mathrm{DomPerm}$ above: it includes only %
the nontrivial disjuncts of the $\mathrm{ite}$ expressions, and it  uses $\mathrm{skd},\mathrm{skd}+1$ in place of the existential quantifier. 

    \else Note that, compared to $\rankname$ above, the reduction formula presented in \Cref{example:toystab}  is  simplified in two ways: 
    we consider only some of the disjuncts produced by the $\mathrm{ite}$ expressions, and we substitute $\mathrm{skd},\mathrm{skd}+1$ for the existential quantifiers produced by the $\mathrm{DomPerm}$ constructor. Such instantiations of existenial quantifiers turn out to be useful in practice, as we discuss in \Cref{sec:evaluation}.
    \fi
\end{example}

\begin{example}
\label{ex:Weighted}
Consider a ranking function $|\{x\mid \alpha(x)\}|+2|\{x\mid \beta(x)\}|$, where $\alpha$ and $\beta$ are two predicates, and consider the case where conservation of ranking between states $\struct_\low,\struct_\high$ holds due to `mixing' of the predicates, for example ``exchanging'' one element $x^0$ that satisfies $\beta$ in $\struct_\low$  with two elements $x^1,x^2$ that satisfy $\alpha$ in $\struct_\high$.
We can use the DomPerm constructor to capture such a ranking argument.
We first add to the signature of the transition system an enumerated sort \textbf{type} with three values: $\mathrm{type}_\alpha,\mathrm{type}_\beta^1,\mathrm{type}_\beta^2$. We then define the unified predicate $\gamma(x,ty)=(ty=\mathrm{type}_\alpha\wedge\alpha(x))\vee ( (ty=\mathrm{type}_\beta^1 \vee  ty=\mathrm{type}_\beta^2 )\wedge \beta(x) )$. This can be understood as  introducing one copy of the domain for $\alpha$ and two copies for $\beta$,  
and interpreting each of $\alpha$ and $\beta$ over the copies relevant to them. Finally, we consider $\rankname = \mathrm{DomPerm}(\mathrm{Bin}(\gamma),(x,ty),2)=(\conserved
,\reduced)$. 
To see that the aforementioned pair of states indeed satisfies the resulting conservation formula, i.e., $(\struct_\low,\struct_\high)\models \conserved$, consider the permutation defined by $\seq {y^1_{\scriptstyle\leftarrow}}=(x^0,\mathrm{type}^1_\beta), 
\seq {y^2_{\scriptstyle\leftarrow}}=(x^0,\mathrm{type}^2_\beta),
\seq {y^1_{\scriptstyle\rightarrow}}=(x^1,\mathrm{type}_\alpha), \seq {y^2_{\scriptstyle\rightarrow}}=(x^2,\mathrm{type}_\alpha)$.

\end{example}

\begin{remark}
    Our decision to focus on permutations obtained by transpositions in our first-order encoding was motivated by examples such as \Cref{ex:Weighted}. There are of course other classes of permutations that can be encoded in first-order logic and would also result in a sound approximation of $\leq_{\text{perm}}$.
\end{remark}

Note that \Cref{const:DomPW} (domain-pointwise) can be understood as a special case of \Cref{const:DomPerm} obtained by considering the degenerate case $k=0$.

\paragraph{\bf Domain-Lexicographic Constructor.}
The following constructor is an analog of \Cref{const:Lex}, where instead of aggregation based on the order of indices of the given rankings, we aggregate based on a partial order on $\assignset(\seq y,\domain)$. 
To that end we rely on an already-existing order in the system, encoded  
by a single signature formula $\orderformula(\seq {y_\low},\seq {y_\high})$, and axiomatized 
as in \Cref{const:Pos}.

For a partially-ordered set $A$ with 
a partial order $\leq_A$ and a set $Y$ with a wfpo $\leq_Y$,
the set of functions $Y\to A$ can be ordered by the lexicographic partial order:
for any $a,b\in Y\to A$ we have 
$a\leq_{\text{lex}}~b \iff a(y) \leq_A b(y)$ for all minimal elements of the set $\{y\in Y \mid a(y)\neq b(y)\}$ (for $a\neq b$, this set is not empty, because $\leq_{Y}$ is a wfpo)\ifshort\else\footnote{
This definition is more complicated than the standard definition since we allow $\leq_Y$ to be any wfpo, and not necessarily a well-order.}\fi.
Equivalently, for every $y\in Y$ such that $a(y)\not\leq_A b(y)$ there exists $y^*$ such that $y^* <_Y y$ with $a(y^*) <_A b(y^*)$.
Additionally, if $\leq_A$ is a wfpo and $Y$ is finite,
$\leq_{\text{lex}}$ is a wfpo.
The constructor encodes this in a straight-forward way: the set $Y$ is $\assignset(\seq y, \domain)$ and $<_Y$ is given by $\orderformula$.
\ifshort A corresponding ranking function can be defined by the interpretation of $\orderformula$ as in \Cref{const:Pos} combined with $\lambda \otherassign\in \assignset(\seq y, \domain). f_A(\struct,\otherassign\concatfunc\assign)$ as in \Cref{const:DomPW}.\else\fi

\begin{constructor} \label{const:DomLex}
The \emph{domain-lexicographic constructor} \ifshort receives \else receives as input \fi an
implicit ranking $\rankname^\insup=(\conservedsuper{\insup},\reducedsuper{\insup})$
with parameters $\seq x = \seq y\concatvar\seq z$, and a formula $\orderformula(\seq {y_\low},\seq {y_\high})$ over $\Sigma$.
It returns an implicit ranking for finite domains $\mathrm{DomLex}(\rankname^\insup,\seq y, \orderformula)=({\conserved},{\reduced})$ with parameters $\seq z$ defined by:
\begin{align*}
    {\conserved}(\seq {z_\low},\seq {z_\high}) & = \immutorder(\orderformula)\wedge \forall \seq{y}.  (\conservedsuper{\insup}(\seq {y}\concatvar \seq {z_\low},
    \seq {y}\concatvar \seq {z_\high})
    \vee \exists \seq {y^*}. (
    \orderformula_\low(\seq{y^*},\seq{y})
    \wedge
    \reducedsuper{\insup}(\seq {y^*}\concatvar \seq {z_\low},
    \seq {y^*}\concatvar \seq {z_\high})
    ) 
    )
     \\
    {\reduced}(\seq {z_\low},\seq {z_\high}) & = {\conserved}(\seq {z_\low}, \seq {z_\high}) \wedge
(\exists \seq y. {\reducedsuper{\insup}}( \seq y \concatvar \seq {z_\low}, \seq y \concatvar \seq {z_\high} )) 
\end{align*}
\end{constructor}
\ifshort \else For the soundness proof of the domain lexicographic constructor we can use a  construction of ranking range and ranking function similar to the one of \Cref{const:DomPW}, combined with the ideas of the partial order presented in the soundness proof of \Cref{const:Pos}.
\fi

\begin{example}\label{ex:DomLex} Continuing with \Cref{example:binary_counter}, define a formula $\formula(i)$ which is set to true if the array holds a $1$ in index $i$, and take $\rankname^{\insup}  = \mathrm{Bin}(\formula)$. 
Take, as in \Cref{ex:Pos}, $\orderformula(i_\low,i_\high) = \mathrm{lt}(i_\low,i_\high)$. Then, $\rankname = \mathrm{DomLex}(\rankname^{\insup},i,\orderformula)$ captures lexicographic reduction of values in the array, akin to reduction in a binary counter.
Compared to the formula given above, the reduction formula presented in \Cref{example:binary_counter} is simplified by the additional assumption that the order given by $\ell$ is total, which need not be the case in general.
\end{example}
Note that \Cref{const:DomPW} (domain-pointwise) can be understood as a special case of \Cref{const:DomLex} obtained by considering the degenerate case $\orderformula(\seq {y_\low},\seq {y_\high}) = \mathrm{false}$.

\section{Implementation and Evaluation}\label{sec:evaluation}

To explore the power of the implicit rankings defined by our constructors we implemented
a deductive verification procedure 
for liveness properties in python, using the Z3 API~\cite{z3}.
The procedure is based on a proof rule, given in \Cref{appendix:proof_rule}, that uses implicit rankings to prove liveness, inspired by existing proof rules.
Our implementation is available at \cite{github}.
It takes as input a first-order transition system, a liveness property of the form $(\bigwedge_i \forall \seq x \globally \eventually r_i(\vec x)) \to  \globally (p \to\eventually q)$ where $r_i$ are parameterized  fairness assumptions, a closed implicit ranking  defined using the constructors of \Cref{sec:constructions} 
and the other formulas required for applying the proof rule. 
Given the above,  we automatically construct the implicit ranking formulas $\reduced,\conserved$ defined by the constructors and use Z3 to validate the premises of the rule. 
Some of the domain-based constructors create formulas with quantifier alternations, which may be challenging for solvers.   
We thus allow the user to provide \emph{hint terms} for the existential quantifiers in the declaration of such constructors. 
We then replace the existential quantification with disjunction over formulas substituted with these terms in the solver queries (this is sound since the implicit ranking formulas appear only positively in the proof rule and in recursive constructors).

\subsubsection*{Results.}
We evaluate our tool on a suite of examples from previous works, listed below. 
We use Z3 version 4.12.2.0, run on a laptop running Windows with a Core-i7 2.8 GHz CPU. All examples are successfully verified within 10 minutes.
In some examples, such as our two motivating examples, there are no fairness assumptions, in which case we use
$\globally \eventually \mathrm{true}$ as a fairness assumption. 
This amounts to assuming totality of the transition relation, which indeed holds in these examples.
We expand on the issue of totality in \Cref{appendix:totality}.
Some of the more complicated examples required some abstraction techniques, which we describe in \Cref{appendix:abstractions}. 
Next we describe the examples. For each example, we present the 
composition of the constructors that define the implicit ranking. We omit the other arguments of the constructors. Additionally, we note whenever the validation of the premises required user-provided hint terms for the existential quantifiers introduced by the constructors. 
In all examples we use finite-domain constructors, which assume a finite (but unbounded) semantics for the domain of the relevant sorts.

\textit{Toy Stabilization.} (\Cref{example:toystab}) The ranking argument described in \Cref{sec:motivating_examples} is captured by an implicit ranking  defined by DomPerm(DomPW(Bin)).

\textit{Binary Counter.} (\Cref{example:binary_counter})
The ranking argument described in \Cref{sec:motivating_examples} is captured by an implicit ranking defined using Lex(DomLex(Bin),Pos).

\textit{Mutex Ring.} Taken from \cite{liveness_invisible}, a simple mutual exclusion protocol, where a token moves around a ring, allowing any machine that holds it to enter the critical section. The liveness property is that every machine that tries to enter the critical section eventually does.
We use an implicit ranking %
defined by DomLex(Lex(Bin,Bin,Bin)), where Lex(Bin,Bin,Bin) is used to capture the local state of a machine and the domain-lexicographic aggregation captures the linear movement of the token between machines along the 
ring.
In \Cref{appendix:linsums} we discuss a different implicit ranking, based on the DomLin constructor.

\textit{Leader Ring.} The Chang-Roberts algorithm for leader election in a ring~\cite{chang_roberts}, with first-order modeling and invariants based on~\cite{bounded_horizon,ivy}.
The liveness property is that eventually a leader is elected. We use an implicit ranking defined by
Lex(DomPW(Bin),DomLex(Bin)),
where DomPW(Bin) is used to track the cardinality of the set of machines that did not yet send their id, and 
DomLex %
tracks the movement of messages around the ring by aggregating a binary ranking based on the formula $\formula(i,j)=\mathrm{pending}(\mathrm{id}(i),j)$ lexicographically according to an order on $(i,j)$ that depends only on the ring order of $j$.

\textit{Self-Stabilization Protocols.}
We verify several self-stabilizing protocols~\cite{dijkstra_self_stab}. In these protocols, a set of machines is organized in a line/ring. 
Each machine holds some local value(s). Privileges are assigned to machines according to 
derived relations between a machine's values and its neighbors' values. In every transition a privileged machine changes its local values, with the effect that it loses its privilege and a new privilege may be created for one of its neighbors.
The protocols differ in the privileges they use and in the way the local states of machines are defined and updated.
The desired properties of a self-stabilizing protocol are:
(i) eventually a unique privilege is present (stabilization), 
(ii) starting with a unique privilege, uniqueness is maintained globally (maintenance), and 
(iii) starting with a unique privilege, every machine gets a privilege infinitely often (fairness).
We verify properties (i) and (iii) which are liveness properties.%

\textit{Dijkstra's $k$-State Protocol}. In this protocol every machine holds a value taken from a set whose cardinality is larger than the number of machines. 
A designated machine $\mathrm{bot}$ %
can introduce new values into the ring.
We prove stabilization by proving three lemmas (based on \cite{regular_abstractions},\cite{mechanized_k_state_stab}): 
(a) machine $\mathrm{bot}$ is eventually scheduled, 
(b) if $\mathrm{bot}$ is scheduled infinitely often then eventually $\mathrm{bot}$ holds a unique value in the ring and 
(c) if $\mathrm{bot}$ holds a unique value in the ring eventually there is a unique privilege in the ring.
The implicit ranking required for properties (a) and (c) is the same and has structure DomLex(Bin), tracking the movement of privileges towards $\mathrm{bot}$ by aggregation over the machines according to the ring order.
(We could have also used the implicit ranking of \Cref{example:toystab}.)
The implicit ranking for property (b) has structure DomPW(Bin), aggregating over values to
capture the distance between $\mathrm{bot}$'s value and a new value.
For fairness, a similar ranking is used as for properties (a) and (c), tracking the movement of a privilege towards the machine for which fairness is shown.

\textit{Dijkstra's $4$-State Protocol.} In this protocol every machine holds two binary values %
and two kinds of privileges can be derived: from above and from below.
The different privileges move between machines in different directions. We base our proof on observations in \cite{dijkstra_4state_stab_note}.
The implicit ranking we use for proving stabilization is defined by a lexicographic pair $\mathrm{Lex}(\rankname^1,\rankname^2)$ where $\rankname^1$ has structure DomPerm(Bin), capturing the number of privileges of both kinds. $\rankname^2$ has structure DomPerm(PW(DomPW(Bin),DomPW(Bin)))
which intuitively gives a bound on the number of moves required from all machines until a privilege is lost. To that end,
DomPerm is used as in \Cref{example:toystab} to approximate the sum of moves required from a single machine.
This number can be at most twice the number of machines.
Therefore we use
two DomPW(Bin) rankings, each capturing a cardinality of a set of machines. The two are composed by a PW constructor.
For this implicit ranking we used simple hint terms as described above, replacing existential quantification with instantiation of $\mathrm{skd},\mathrm{skd.next}$ or $\mathrm{skd.prev}$.
The structure of implicit ranking we use for fairness is somewhat simpler: $\mathrm{Lex}(\rankname^1,\rankname^2,\rankname^3)$ with each $\rankname^i$ tracking a movement of some privilege in some direction using $\mathrm{DomLex(Bin)}$.

\textit{Ghosh's $4$-State Protocol.} A simplification of Dijkstra's $4$-state protocol, where every machine holds a value in $\{0,1,2,3\}$, and two kinds of privileges are derived.
The implicit ranking we use to prove both stabilization and fairness is similar to the respective ranking for Dijkstra's $4$-state protocol, except that we have a lexicographic component that encodes a reduction in the number of \emph{breaks} which are successive nodes that hold different values (see~\cite{ghosh_stab}).

\textit{Dijkstra's $3$-State Protocol.} 
In this protocol every machine holds a value in $\{0,1,2\}$ and two kinds of privileges are derived.
The implicit ranking we use for proving stabilization is a direct encoding of the ranking function given in \cite{kessels_3state_stab_proof}, which itself is based on observations in \cite{dijkstra_3state_stab_proof}. 
This ranking is a 4-argument lexicographic ranking $\mathrm{Lex}(a,b,c,d)$. 
Ranking $a$ captures a weighted sum of set cardinalities using an implicit ranking with structure $\mathrm{DomPerm}(\mathrm{Bin})$ with $k=4$ as shown in \Cref{ex:Weighted}, notably this generates premises with 18 existential quantifiers which require relatively complicated hints to validate.
Rankings $b$ and $c$ capture set cardinalities encoded using $\mathrm{DomPerm}(\mathrm{Bin})$ and ranking $d$ captures the number of moves from any machine until reduction in either $a,b,c$, similarly to the $4$-state protocol, encoded using composition of DomPerm and DomPW.%
The implicit ranking we use for fairness is similar to Dijkstra's $4$-state protocol.

\section{Related Work and Concluding Remarks}\label{sec:related}

Numerous approaches for liveness verification have been proposed in the literature, including
abstraction techniques~\cite{regular_abstractions,abstraction_refinement},
liveness to safety reductions~\cite{liveness_to_safety,liveness_to_safety_orig,liveness_to_safety_via_implicit_abstraction,prophecy}, rich proof structures~\cite{liveness_parametrized_programs,fairness_modulo_theory,automata_approach}, and more~\cite{ironfleet,anvil}.
Our work considers verification based on the classical notion of ranking functions~\cite{checking_large,meanings_to_programs}. 
Many proof rules based on ranking functions have been suggested, %
e,g.,~\cite{completing_temporal,towards_liveness_proofs,liveness_invisible,transition_invariants,termination_systems_code}.
As explained in \Cref{subsec:using_ranking}, our constructions of implicit rankings can be used in any rule that requires conservation and reduction of rank.
In contrast to~\cite{lvr,synthesis_linear_ranking,ordinal_valued_ranking,synthesis_by_bits,linear_ranking_lasso,proving_lagrangian} which automate the search for ranking functions, we focus on expanding the class of ranking functions that can be used and leave automation for future work.
We now turn to expand on the most relevant recent works.

The closest work to ours is~\cite{towards_liveness_proofs}, which 
uses \emph{relational rankings} or fixed-size lexicographic tuples of  them as ranking functions in liveness proofs for first-order transition systems. Relational rankings count the number of elements that satisfy some predicate, and their reduction and conservation are measured approximately in a pointwise fashion. 
Relational rankings can be captured by our constructors, specifically $\mathrm{DomPW}(\mathrm{Bin})$, but our constructors induce a richer family of rankings and offer more expressiveness, sometimes, at the expense of more complex quantifier structure. 
On the other hand,~\cite{towards_liveness_proofs} does not assume finiteness, but proves that the set considered is finite at any time. Generalizing our finite-domain constructors to this setting is not trivial, and is a subject for future work.
The proof rule of~\cite{towards_liveness_proofs} also allows modular temporal reasoning.  The definition of implicit ranking we offer can also be used in such a proof rule.

The approach of \cite{lvr} focuses on automating the search for a ranking function for  protocols modeled in FOL. They automatically synthesize integer-valued polynomial ranking functions from integer variables that appear in the protocol specification, fairness variables \ifshort \else that they manually incorporate to encode fairness\fi, scheduling variables, and variables that reflect cardinalities of sets defined by predicates (some of which are user-provided).
Similarly to our domain-based aggregations, they assume finiteness of certain domains.
While we use the assumption implicitly, they introduce integer variables that bound the finite cardinalities.
In both~\cite{towards_liveness_proofs} and~\cite{lvr}, only rankings that are polynomial in the cardinality of the domain can be used, which cannot capture \Cref{example:binary_counter}.
Additionally, their domain-based aggregations, which are limited to cardinalities of sets, cannot be recursively composed, which is needed for examples such as \Cref{example:toystab}.%

The approach of~\cite{liveness_to_safety,prophecy} is based on a liveness-to-safety reduction which uses a dynamic finite abstraction to reduce liveness to acyclicity of traces. The reduction is encoded via a monitor that augments the original transition system. 
Liveness is established by showing that 
an arbitrary monitored state called the saved state is never revisited.
While extremely powerful, the approach is difficult to use since one has to find an invariant of the augmented transition system that justifies that the current state is never equal to the saved state.
Our implicit rankings can be understood as a natural way to describe such two-state invariants without having to reason about an augmented transition system. This also makes them more amenable to automation. 

\subsubsection{Acknowledgement}
We thank Neta Elad and Eden Frenkel for helpful discussions.
The research leading to these results has received funding from the
European Research Council under the European Union's Horizon 2020 research and innovation programme (grant agreement No [759102-SVIS]).
This research was partially supported by the Israeli Science Foundation (ISF) grant No.\ 2117/23.

\bibliography{references}

\newpage

\appendix

\section{Alternative Definitions for Implicit Ranking}\label{appendix:definition}

In \Cref{def:implicit_ranking} we define formulas both for conservation and for reduction. In the case of a partial order $\leq$, we can derive the strict order $<$ from $\leq$ by considering $\leq \cap \not \geq$. This might suggest that we could use $\conserved(\seq {x_\low},\seq {x_\high}) \wedge \neg \conserved(\seq {x_\high},\seq {x_\low})$ for $\reduced(\seq {x_\low},\seq {x_\high})$. However, since $\conserved$ only underapproximates $\leq$, it
is sometimes the case that the above does not guarantee reduction.
This is the case, for example, in the implicit ranking used to underapproximate reduction/conservation of an unbounded sum in \Cref{example:toystab} (see \Cref{const:DomPerm}).

A natural alternative to \Cref{def:implicit_ranking} is a definition that requires that $\conserved,\reduced$ define a wfpo on the set of {\pairnames} itself, instead of mapping to a set $A$ with its own wfpo.
This is a strictly stronger definition: if $\conserved,\reduced$ define a wfpo on the set of
{\pairnames}, then we can take the ranking function as the identity.
On the other hand, a natural way to derive relations on the set of {\pairnames} is by a \textit{pullback}. That is: $R_\leq = \{ (\struct_\low,\assign_\low),(\struct_\high,\assign_\high) \mid f(\struct_\low,\assign_\low) \leq f(\struct_\high,\assign_\high) \}$.
It is easy to verify that $R_\leq$ is always transitive, but unfortunately it need not be weakly antisymmetric. If we define $R_<$ similarly we do get a strict well-founded order. If we then attempt to define $R_\leq = R_< \cup \text{Id}$, conservation would not necessarily hold.
Recall that because we are interested in liveness under fairness and approximations of orders we must define both $\leq$ and $<$.

This gives rise to the ``correct'' alternative definition: $\conserved,\reduced$ are an implicit ranking if there exist relations on pairs of {\pairnames} $R_\leq, R_<$ such that the set of models of $\conserved$ and $\reduced$ are contained in $R_\leq$ and $R_<$ respectively and additionally: \begin{itemize}
    \item $R_\leq$ is a pre-order (transitive and reflexive).
    \item $R_<$ is antisymmetric, transitive, and well-founded.
    \item $R_\leq \circ R_< \subseteq R_<$ and $R_< \circ R_\leq \subseteq R_<$. 
\end{itemize}
Importantly, as before, we only require the set of models to be contained in $R_\leq$ and $R_<$ and not necessarily equal to allow approximations. 
Another note is that transitivity (and 
similarly, reflexivity of $R_\leq$) does not need to be checked as $R_\leq$ and $R_<$ can be replaced by their transitive closure without loss of the other properties. 
We give \Cref{def:implicit_ranking} and not the alternative as it is more similar to the classic notions of ranking functions, and easier to show in the proofs, without needing to go through the pullback. 
This alternative definition fits with definitions such as that of \cite{transition_invariants}.

There is a connection between transitivity of the induced relations from $\conserved,\reduced$ and whether or not they are approximate or capture an exact conservation or reduction.
If $\conserved,\reduced$ are transitive, one can define the ranking range to be the quotient set of the set of {\pairnames} by the equivalence relation derived from the pre-order $\conserved$ with $f$ being the natural mapping.
Then $\conserved,\reduced$ exactly capture conservation and reduction of $f$. 
In the other direction, if $\conserved,\reduced$ exactly capture reduction of some ranking function $f$ then it follows they must be transitive from transitivity of the wfpo on the ranking range.
One can verify that formulas $\conserved,\reduced$ of implicit rankings constructed using \Cref{const:DomPerm} are not transitive --- they are strict under-approximations. 
Our other constructors are not approximate.

\section{Linear Sum Constructors}\label{appendix:linsums}

In this section we give two more constructors, which are based on the notion of the linear sum of partial orders.
The first aggregates finitely-many rankings, similarly to \Cref{const:Lex,const:PW} and the second is a domain-based analog of the first, similarly to \Cref{const:DomLex,const:DomPW,const:DomLex}.
The soundness proofs are given with the soundness proofs of the other constructors in \Cref{appendix:proofs}.
While these constructors are natural, they add less expressiveness than the ones we saw before, and can sometimes be replaced by the corresponding lexicographic constructors. 
We managed to verify all considered examples without these constructors, but we have implemented them as well.

\paragraph{\bf Linear Sum Constructor.}
For partially-ordered sets $A_1,\ldots,A_m$ with partial orders $\leq_1,\ldots,\leq_m$ respectively, one can define the linear sum (or ordinal sum) partial order $\leq_\oplus$ on the set $A_1\uplus\cdots \uplus A_m$ by
$ a \leq_{\oplus} b \iff \bigvee_i (a,b \in A_i \wedge a\leq_i b) \vee \bigvee_{i,j} (a\in A_i, b \in A_j \wedge i < j)$.
Intuitively, elements in the same set are ordered by the order of that set, and elements in different sets are ordered by the index of their set.
As before, if the orders $\leq_i$ are all wfpos then so is $\leq_\oplus$.

The linear sum constructor uses this definition for aggregation by partitioning the {\pairnames} into disjoint sets and ranking each part using a different given ranking $\rankname^i$.
The partition is defined by formulas $\formula^i$,  
with the intention that the part an {\pairname} belongs to is determined by the minimal formula $\formula^i$ it satisfies. Minimality is encoded by $\formulaa^i = \formula^i \wedge \bigwedge_{j < i} \neg \formula^j$.

\begin{constructor}  \label{const:Lin}
The \emph{Linear Sum Constructor} \ifshort receives \else receives as input \fi
implicit rankings $\rankname^i=(\conservedsuper{i},\reducedsuper{i})$ 
for $i=1,\ldots,m$, each with parameters $\seq x$,
and formulas $\formula^i(\seq x)$ over $\Sigma$.
It returns an implicit ranking $\mathrm{Lin}(\rankname^1,\ldots,\rankname^m,\formula^1,\ldots,\formula^m) = ({\conserved},{\reduced})$ with parameters $\seq x$ defined by:
    \begin{center}
        \begin{tabular}{rl}
&$\formulaa^i(\seq x) = \formula^i (\seq x) \wedge \bigwedge_{j<i} \neg \formula^j (\seq x)$\\
&${\conserved}(\seq {x_\low},\seq {x_\high}) = \left(\bigvee_{i} \conservedsuper{i}(\seq {x_\low},\seq {x_\high}) \wedge \formulaa_\low^i(\seq {x_\low}) \wedge \formulaa_\high^i(\seq {x_\high})\right) \vee \left(\bigvee_{i<j} \formulaa_\low^i(\seq {x_\low}) \wedge \formulaa^j_\high(\seq {x_\high}) \right)$\\
&$\reduced(\seq {x_\low},\seq {x_\high}) = \left(\bigvee_{i} \reducedsuper{i}(\seq {x_\low},\seq {x_\high}) \wedge \formulaa_\low^i(\seq {x_\low}) \wedge \formulaa_\high^i(\seq {x_\high})\right) \vee \left(\bigvee_{i<j} \formulaa_\low^i(\seq {x_\low}) \wedge \formulaa_\high^j(\seq {x_\high}) \right)$
\end{tabular} 
\end{center}

\end{constructor}

\paragraph{\bf Domain-Linear Sum Constructor.}
The generalization of the linear sum partial-order to the disjoint union of an unbounded number of sets gives a lexicgoraphic order on the set $Y\times A$. The first component determines a set in the disjoint union and the second compares within a set. If both $\leq_A$ and $Y$ are wfpo, 
$\leq_{\text{lex}}$ is a wfpo on $Y\times A$.
The first-order logic encoding is similar to the one used in \Cref{const:Lin}:
we partition the set of {\pairnames} into disjoint sets, and compare {\pairnames} within the same part using a given implicit ranking. 
However, here, the implicit rankings used for each part are the \emph{same} implicit ranking, instantiated with different inputs for $\seq y$. 
Further, the partition is defined using a \emph{single} formula $\formula(\seq y)$, with the intention that the part an {\pairname} belongs to is determined by the minimal value of $\seq y$ for which it satisfies $\formula$. Minimality is measured according to the order on values of $\seq y$ induced by $\orderformula$. This condition is encoded by the formula $\formulaa$.

\begin{constructor} \label{const:DomLin}
The \emph{Domain Linear Sum} \ifshort receives \else receives as input \fi an
implicit ranking $\rankname^\insup=(\conservedsuper{\insup},\reducedsuper{\insup})$
with parameters $\seq x=\seq y\concatvar\seq z$, a formula $\orderformula(\seq {y_\low},\seq {y_\high})$ over $\Sigma$, and a formula $\formula(\seq x)$ over $\Sigma$.
It returns an implicit ranking for finite domains $\mathrm{DomLin}(\rankname^\insup,\seq y, \orderformula,\formula)=({\conserved},{\reduced})$ with parameters $\seq z$ defined by:
\begin{align*}
& \formulaa(\seq y \concatvar \seq z) = \formula(\seq y\concatvar\seq z) \wedge \forall {\seq {y''}}.( \orderformula(\seq {y},\seq {y''})\vee (\seq y = \seq {y''}) \vee \neg \formula(\seq {y''}\concatvar \seq {z}))\\
&\reduced(\seq {z_\low},\seq {z_\high}) =  \ \immutorder(\orderformula) \wedge (\exists \seq {y}. (\reducedsuper{\insup}(\seq y \concatvar \seq {z_\low}, \seq y \concatvar \seq {z_\high}) \wedge 
\formulaa_\low(\seq {y}\concatvar\seq{z_\low})\wedge
\formulaa_\high(\seq {y}\concatvar\seq{z_\high})) \vee \\ 
&\qquad\qquad\qquad  \exists {\seq y},{\seq {y'}}. (\formulaa_\low(\seq y \concatvar \seq {z_\low})\wedge
\formulaa_\high(\seq {y'}\seq {z_\high}) \wedge \orderformula_\low(\seq {y}, \seq {y'})) )\\
&\conserved(\seq {z_\low},\seq {z_\high}) =  \ \immutorder(\orderformula) \wedge (\exists \seq {y}. (\conservedsuper{\insup}(\seq y \concatvar \seq {z_\low}, \seq y \concatvar \seq {z_\high}) \wedge 
\formulaa_\low(\seq {y}\concatvar\seq{z_\low})\wedge
\formulaa_\high(\seq {y}\concatvar\seq{z_\high})) \vee \\ 
&\qquad\qquad\qquad  \exists {\seq y},{\seq {y'}}. (\formulaa_\low(\seq y \concatvar \seq {z_\low})\wedge
\formulaa_\high(\seq {y'}\seq {z_\high}) \wedge \orderformula_\low(\seq {y}, \seq {y'})) )
\end{align*}
\end{constructor}
Soundness of the domain linear sum constructor is established by combining the constructions used in \Cref{const:Pos} and \Cref{const:Lin}.

\begin{example}
    Continuing on our evaluation, we consider the Mutex Ring example.   
    In \Cref{sec:evaluation} we describe a ranking with structure DomLex(Lex(Bin,Bin,Bin). 
    Instead, one can consider a ranking with DomLin(Lex(Bin,Bin,Bin)) with $\formula(y) = \mathrm{token}(y)$. 
    In this way, if the token moves along the ring the rank is reduced, otherwise the rank is reduced when the local rank of the machine holding the token is reduced.
    While this is more natural, the same argument also holds for the lexicographic constructor (which captures many other cases as well). On the flip side, for the DomLin constructor we need to provide more hints as we have more existential quantifiers, and these hints are less natural. For this example, the  required hint is given by the term $\epsilon y. \mathrm{token}(y)$ (written with Hilbert's $\epsilon$ operator and axiomitized by $\exists y. \mathrm{token}(y) \to \mathrm{token}(\epsilon y. \mathrm{token}(y)) $).
\end{example}

\section{Proofs of Soundness of Constructors}\label{appendix:proofs}
Here we give proofs of soundness of \Cref{const:Bin,const:Pos,const:DomLex,const:DomPW,const:DomPerm,const:Lex,const:PW}, \Cref{const:Lin,const:DomLin}.
Recall, our definition of soundness is that (i)~if the arguments of a constructor %
satisfy their assumptions, then the output of the constructor is an implicit ranking (for finite domains if the constructor is finite-domain), 
and (ii)~for constructors that receive implicit rankings as arguments, if at least one of the arguments is an implicit ranking for finite domains, so is the constructed implicit ranking.

\begin{claim}
    The binary constructor (\Cref{const:Bin}) is sound.    
\end{claim}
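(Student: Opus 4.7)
The plan is to exhibit, for any domain $\mathcal D$, a ranking range and a ranking function witnessing that $\mathrm{Bin}(\alpha) = (\varphi_\leq, \varphi_<)$ satisfies \Cref{def:implicit_ranking}. The natural choice is the two-element chain $A = \{0,1\}$ with $0 < 1$, which is finite and hence a wfpo irrespective of $\mathcal D$. Correspondingly, I would define the ranking function $f : \mathrm{a\text-struct}(\Sigma,\vec x,\mathcal D) \to A$ by
\[
f(s,v) = \begin{cases} 1 & (s,v) \models \alpha(\vec x) \\ 0 & \text{otherwise.}\end{cases}
\]

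The verification then reduces to inspecting the two implications in \Cref{def:implicit_ranking}. For conservation, suppose $(\struct_\low,\assign_\low),(\struct_\high,\assign_\high) \models \alpha_\low(\vec x_\low) \to \alpha_\high(\vec x_\high)$. If $f(\struct_\low,\assign_\low) = 0$ then trivially $f(\struct_\low,\assign_\low) \leq f(\struct_\high,\assign_\high)$; otherwise $(\struct_\low,\assign_\low) \models \alpha$, so by the implication $(\struct_\high,\assign_\high) \models \alpha$ and hence $f(\struct_\high,\assign_\high) = 1$. For reduction, suppose $(\struct_\low,\assign_\low),(\struct_\high,\assign_\high) \models \alpha_\high(\vec x_\high) \wedge \neg \alpha_\low(\vec x_\low)$, which directly gives $f(\struct_\low,\assign_\low)=0 < 1 = f(\struct_\high,\assign_\high)$.

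Since the construction of $A$, $\leq$ and $f$ goes through uniformly for every (possibly infinite) domain $\mathcal D$, $\mathrm{Bin}(\alpha)$ is an implicit ranking, not merely one for finite domains. Condition (ii) of soundness is vacuous here, as the binary constructor takes no implicit rankings as arguments. The only mildly subtle point, and what I would flag as the ``main obstacle'' despite being minor, is confirming that the free variables of $\varphi_\leq$ and $\varphi_<$ are indeed confined to $\vec x_\low, \vec x_\high$ as required by \Cref{def:implicit_ranking}; this is immediate from the syntactic form of the two formulas and the convention that $\alpha_b$ has free variables $\vec x_b$.
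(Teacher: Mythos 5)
Your proposal is correct and matches the paper's own proof essentially verbatim: the same two-element ranking range $\{0,1\}$ with $0<1$, the same indicator ranking function, and the same case analysis for conservation and reduction. The extra remarks on uniformity over arbitrary domains and on the free-variable condition are fine but not needed beyond what the paper already establishes.
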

\begin{proof}
    \label{proof:binary_const}
    Let $\formula$ be a formula, and define $\rankname = \mathrm{Bin}(\formula)$. Fix a domain $\domain$ and define a ranking range for $\rankname,\domain $ by 
    $A=\{0,1\}$ with $0<1$. Clearly this defines a wfpo. We define a ranking function for $\domain$ by: $$f(s,\assign)=\begin{cases}
        1 & \struct,\assign\models \alpha(\seq x)\\ 
        0 & \text{otherwise} 
    \end{cases}$$
    Then, if $ \twopair\models \conserved(\seq {x_\low},\seq {x_\post})= 
    \formula_\low(\seq {x_\low}) \to \formula_\high(\seq {x_\high})$, then if 
    $\pairlow\models \alpha_\low(\seq {x_\low})$ we have $\pairhigh\models \alpha_\high(\seq {x_\high})$ as well and $f\pairlow=f\pairhigh=1$. 
    Otherwise, $f\pairlow = 0 \leq f\pairhigh$. 
    If $ \twopair\models     {\reduced}(\seq {x_\low},\seq {x_\high}) = \formula_\high(\seq {x_\high}) \wedge \neg \formula_\low(\seq {x_\low})$, then $f\pairlow = 0 < 1 = f(\pairhigh$

\end{proof}

\begin{claim}
    The position-in-order constructor (\Cref{const:Pos}) is sound.
\end{claim}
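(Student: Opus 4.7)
The plan is to instantiate the ranking range construction sketched in the paragraph following \Cref{const:Pos}: since the order on terms is induced by $\orderformula$, and the interpretation of $\orderformula$ may a priori differ between structures, we absorb that interpretation into the range itself, turning a structure-dependent order into a global one.

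Fix a finite domain $\domain$. Let $\mathrm{interp}(\orderformula,\domain)$ denote the set of all possible interpretations of $\orderformula$ over $\domain$, viewed as subsets of $\assignset(\seq {y_\low} \concatvar \seq {y_\high}, \domain)$, and set $A = \mathrm{interp}(\orderformula, \domain) \times \assignset(\seq y, \domain)$. I would define $\leq$ on $A$ by $(\interp_\low^\orderformula, \assign_\low) \leq (\interp_\high^\orderformula, \assign_\high)$ iff $\interp_\low^\orderformula = \interp_\high^\orderformula$ and either $\assign_\low = \assign_\high$ or $\assign_\low \concatfunc \assign_\high \in \interp_\low^\orderformula$, with elements having differing first components left incomparable. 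To ensure antisymmetry and transitivity I would, without loss of generality, restrict to those $\interp^\orderformula$ that are genuinely strict partial orders on $\assignset(\seq y, \domain)$; this restriction is harmless because $\immutorder(\orderformula)$ ensures that the discarded interpretations never occur in a pair satisfying $\conserved$ or $\reduced$. Finiteness of $\domain$ makes $\assignset(\seq y, \domain)$ and $\mathrm{interp}(\orderformula, \domain)$ finite, hence $A$ is finite and $\leq$ is automatically a wfpo.

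Next, I would define the ranking function $f(\struct, \assign) = (\interp^\orderformula(\struct), \assign^{\seq t})$, where $\interp^\orderformula(\struct)$ is the interpretation of $\orderformula$ in $\struct$ and $\assign^{\seq t}$ is the assignment to $\seq y$ sending each $y_i$ to the value of $t_i$ under $(\struct, \assign)$. The two implications of \Cref{def:implicit_ranking} then follow by unfolding: if $\pairlow, \pairhigh \models \conserved$, the conjunct $\immutorder(\orderformula)$ forces $\interp^\orderformula(\struct_\low) = \interp^\orderformula(\struct_\high)$, while the disjuncts $\seq {t_\low} = \seq {t_\high}$ and $\orderformula_\low(\seq {t_\low}, \seq {t_\high})$ translate directly to $\assign_\low^{\seq t} = \assign_\high^{\seq t}$ and $\assign_\low^{\seq t} \concatfunc \assign_\high^{\seq t} \in \interp^\orderformula(\struct_\low)$, yielding $f\pairlow \leq f\pairhigh$. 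The $\reduced$ case is entirely analogous, dropping only the equality disjunct to obtain strict inequality.

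The main subtlety I anticipate is that $\orderformula$'s interpretation is not globally fixed but varies between structures, so the naive choice of range $\assignset(\seq y, \domain)$ equipped with a single order does not work; the fix, as above, is to bundle the interpretation of $\orderformula$ into the range and rely on $\immutorder(\orderformula)$ to invalidate $\conserved$ and $\reduced$ whenever the two interpretations disagree or fail to be a strict partial order. Clause (ii) of the soundness theorem is vacuous here, since \Cref{const:Pos} takes no implicit ranking as input.
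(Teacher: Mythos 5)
Your proof is correct and follows essentially the same construction as the paper's: the same ranking range $\mathrm{interp}(\orderformula,\domain)\times\assignset(\seq y,\domain)$, the same order on it, and the same ranking function $f(\struct,\assign)=(\interp^\orderformula(\struct),\assign^{\seq t})$. Your explicit restriction to interpretations that are genuine strict partial orders (so that $\leq$ is a partial order, hence a wfpo on a finite set) is a small point the paper leaves implicit; note also that strictness (asymmetry, hence irreflexivity) is exactly what yields $\assign_\low^{\seq t}\neq\assign_\high^{\seq t}$ and thus the strict inequality in the reduction case.
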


\begin{proof}
Let $\orderformula$ be a formula, let $\seq t$ be a sequence of terms and define $\rankname=\mathrm{Pos}(\seq t,\orderformula)$. Fix a domain $\domain$. Define $\mathrm{interp}(\orderformula,\domain) = \mathcal P(\assignset(\seq {y_0}\concatvar\seq {y_1},\domain) )$. Define a ranking range for $\rankname,\domain$ by $A = \mathrm{interp}(\orderformula,\domain)\times \assignset(\seq y,\domain)$ with the order $\leq$ on $A$ defined by: $(\interp_\low^\orderformula,\assign_\low) \leq (\interp_\high^\orderformula,\assign_\high)$ if $\interp_\low^\orderformula=\interp_\high^\orderformula$ and $\assign_\low \concatfunc \assign_\high \in \interp_\low^\orderformula$ or $\assign_\low = \assign_\high$.
For a structure $\struct$, the interpretation $\interp^\orderformula(\struct)$ of $\orderformula$ is the set of assignments $\assign_\low \concatfunc \assign_\high \in  \assignset(\seq {y_\low} \concatvar \seq {y_\high},\domain)$ such that $\struct,\assign_\low \concatfunc \assign_\high \models \orderformula$ when interpreting $\assign_\low$ as the assignment to $\Sigma_\low$ and $\assign_\high$ to $\Sigma_\high$ respectively.
A ranking function for $\domain$ is then defined by $f\pair=(\interp^\orderformula(\struct),\assign^{\seq t})$ where $\assign^{\seq t}$ is an assignment to $\seq y$ that takes its value based on the values $\assign$ gives to $\seq t$, that is $\assign^{\seq t}(y_i) = \bar \assign(t_i)$.

Then, if $ \twopair\models {\conserved}(\seq {y_\low} ,\seq {y_\high}) = \immutorder(\orderformula) \wedge (\orderformula_\low(\seq {t_\low},\seq {t_\high}) \vee \seq {t_\low} = \seq {t_\high})$.
Then, in particular $\interp^\orderformula(\struct_\low) = \interp^\orderformula(\struct_\high)$. We now wish to show that $ \assign_\low^{\seq t}\concatvar \assign_\high^{\seq t} \in \interp^\orderformula(\struct_\low)$ or $\assign_\low^{\seq t} = \assign_\high^{\seq t}$.
From the above, we have $\twopair\models  \orderformula_\low(\seq {t_\low},\seq {t_\high}) \vee \seq {t_\low} = \seq {t_\high}$ 
equivalently we have $(\struct_\low,\assign^{\seq t}_\low),(\struct_\high,\assign^{\seq {t}}_\high)\models  \orderformula_\low(\seq {y_\low},\seq {y_\high}) \vee \seq {y_\low} = \seq {y_\high}$ which correspond to $(\assign^{\seq t}_\low,\assign^{\seq t}_\high)\in \interp^\orderformula(\struct_\low)$ and $ \assign^{\seq t}_\low = \assign^{\seq t}_\high$ respectively.
For reduction: $\twopair\models {\reduced}(\seq {y_\low} ,\seq {y_\high})$ 
as before
$(\struct_\low,\assign^{\seq t}_\low),(\struct_\high,\assign^{\seq {t}}_\high)\models  \orderformula_\low(\seq {y_\low},\seq {y_\high})$, and so $(\assign^{\seq t}_\low,\assign^{\seq t}_\high)\in \interp^\orderformula(\struct_\low)$, from the order axioms we get $(\assign^{\seq t}_\high,\assign^{\seq t}_\low)\notin \interp^\orderformula(\struct_\low)$, and in particular $\assign^{\seq t}_\high \neq \assign^{\seq t}_\low$ so $f\pairlow<f\pairhigh$.
\end{proof}

\begin{claim}
    The pointwise constructor (\Cref{const:PW}) is sound. 
\end{claim}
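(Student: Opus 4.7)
The plan is to exhibit, for every domain $\domain$, an explicit ranking range $(A,\leq)$ and ranking function $f$ for $\rankname=\mathrm{PW}(\rankname^1,\ldots,\rankname^m)$, using the ranking ranges $(A_i,\leq_i)$ and ranking functions $f_i$ guaranteed by each input $\rankname^i$.
I would take $A = A_1 \times \cdots \times A_m$, equipped with the pointwise order $\leq_{\text{pw}}$ described in the text preceding \Cref{const:PW}, and set $f(\struct,\assign) = (f_1(\struct,\assign),\ldots,f_m(\struct,\assign))$.

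First I would check that $\leq_{\text{pw}}$ is a wfpo on $A$. Reflexivity, antisymmetry, and transitivity are inherited component-wise from the $\leq_i$. For well-foundedness I would argue by contradiction: an infinite strictly decreasing chain $(a^{(n)})_{n\in\nat}$ in $(A,\leq_{\text{pw}})$ induces, coordinate-wise, infinite non-increasing chains in each $(A_i,\leq_i)$; since each $\leq_i$ is a wfpo, each such chain must eventually stabilize (otherwise one could extract a strictly decreasing subchain), so from some index on all coordinates are constant, contradicting the strict pointwise decrease.

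Next I would verify the two underapproximation properties directly from the semantics of $\conserved$ and $\reduced$. If $\twopair \models \conserved$, then $\twopair \models \conservedsuper{i}$ for every $i$, so $f_i\pairlow \leq_i f_i\pairhigh$ for every $i$, which is exactly $f\pairlow \leq_{\text{pw}} f\pairhigh$. If $\twopair \models \reduced$, then in addition there is some $i_0$ with $\twopair \models \reducedsuper{i_0}$, giving $f_{i_0}\pairlow <_{i_0} f_{i_0}\pairhigh$; combined with conservation in the remaining coordinates this yields $f\pairlow <_{\text{pw}} f\pairhigh$.

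Finally, for clause~(ii) of soundness, I would observe that the construction above uses each $(A_i,\leq_i)$ and $f_i$ only on the domain $\domain$ being considered, so if at least one input $\rankname^i$ only provides a ranking range and function when $\domain$ is finite, the same restriction applies to the output, making $\rankname$ an implicit ranking for finite domains. I do not foresee any real obstacle here; the only mildly delicate point is the well-foundedness argument for the pointwise order, which must use the ``stabilization'' consequence of well-foundedness rather than try to extract a strictly decreasing chain in a single coordinate.
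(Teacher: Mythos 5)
Your proposal is correct and follows essentially the same route as the paper: take the product ranking range $(A_1\times\cdots\times A_m,\leq_{\text{pw}})$ with the tuple ranking function $f=(f_1,\ldots,f_m)$, and check the two implications coordinate-wise. The only difference is that you spell out the well-foundedness of $\leq_{\text{pw}}$ (correctly, via the stabilization argument), whereas the paper simply invokes it as a standard fact stated before \Cref{const:PW}.
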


\begin{proof}    
    Let $\rankname^i=(\conservedsuper{i},\reducedsuper{i})$ be implicit rankings for $i=1,\ldots,m$, each with parameters $\seq x$, and define $\rankname=\mathrm{PW}(\rankname^1,\ldots,\rankname^m)$.
    Let $\domain$ be a domain (if any of $\rankname^i$ are implicit rankings for finite domains, assume as well that $\domain$ is finite), 
    Let $(A_i,\leq_i)$ be a ranking range and $f_i$ be a ranking function of each $\rankname_i$. 
    A ranking range of $\rankname$ for $\domain$ can be defined by $(A_1\times \cdots \times A_m, \leq_{\text{pw}})$ and a ranking function can be defined by $f\pair = (f_1\pair,\ldots,f_m\pair)$.

    Indeed, if $\twopair\models {\conserved}(\seq {x_\low},\seq {x_\high}) = \bigwedge_i \conservedsuper{i}(\seq {x_\low},\seq {x_\high})$, then, for all $i\in [m]$ we have $\twopair\models  \conservedsuper{i}(\seq {x_\low},\seq {x_\high})$, by assumption we get  $f_i\pairlow \leq_i f_i\pairhigh $
    for every $i\in [m]$, and so $f\pairlow \leq_{\text{pw}} f\pairhigh$.
    For reduction, if $\twopair\models {\reduced}(\seq {x_\low},\seq {x_\high}) = {\conserved}(\seq {x_\low},\seq {x_\high}) \wedge \bigvee_i \reducedsuper{i}(\seq {x_\low},\seq {x_\high})$, 
    then $f\pairlow \leq_{\text{pw}} f\pairhigh$, 
    and there exists $i\in [m]$ such that $\twopair\models \reducedsuper{i}(\seq {x_\low},\seq {x_\high})$ then $f_i\pairlow < f_i\pairhigh$ and so $f\pairlow <_{\text{pw}} f\pairhigh$.
\end{proof}

\begin{claim}
    The lexicographic constructor (\Cref{const:Lex}) is sound. 
\end{claim}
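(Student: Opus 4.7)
The plan is to mirror the soundness argument for the pointwise constructor (\Cref{const:PW}), but with the lexicographic order on the product of the individual ranking ranges. Concretely, given implicit rankings $\rankname^i = (\conservedsuper{i}, \reducedsuper{i})$ for $i = 1, \ldots, m$, set $\rankname = \mathrm{Lex}(\rankname^1, \ldots, \rankname^m)$. Fix a domain $\domain$ (finite if any of the $\rankname^i$ is a finite-domain ranking), let $(A_i, \leq_i)$ be a ranking range and $f_i$ a ranking function for $\rankname^i$ on $\domain$. I will take the ranking range for $\rankname$ to be $(A_1 \times \cdots \times A_m, \leq_{\text{lex}})$ with the lexicographic order as defined in \Cref{subsec:finite_aggregations}, and the ranking function $f(\struct, \assign) = (f_1(\struct, \assign), \ldots, f_m(\struct, \assign))$.

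First I would recall that $\leq_{\text{lex}}$ is a wfpo whenever each $\leq_i$ is, which is a standard fact and is already noted in the text. Then I would verify conservation: suppose $\twopair \models \conserved(\seq{x_\low}, \seq{x_\high})$. By definition this means either $\twopair \models \reduced(\seq{x_\low}, \seq{x_\high})$, whose case I handle below and yields $f\pairlow <_{\text{lex}} f\pairhigh$, or $\twopair \models \bigwedge_i \conservedsuper{i}(\seq{x_\low}, \seq{x_\high})$; in the latter case the assumption on each $\rankname^i$ gives $f_i\pairlow \leq_i f_i\pairhigh$ for all $i$, so by the second disjunct in the definition of $\leq_{\text{lex}}$ we get $f\pairlow \leq_{\text{lex}} f\pairhigh$.

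Next I would verify reduction: if $\twopair \models \reduced(\seq{x_\low}, \seq{x_\high})$, then there is some index $i$ such that $\twopair \models \reducedsuper{i}(\seq{x_\low}, \seq{x_\high}) \wedge \bigwedge_{j<i} \conservedsuper{j}(\seq{x_\low}, \seq{x_\high})$. The assumptions on $\rankname^i$ and $\rankname^j$ then give $f_i\pairlow <_i f_i\pairhigh$ and $f_j\pairlow \leq_j f_j\pairhigh$ for every $j < i$, which is exactly the condition for $f\pairlow <_{\text{lex}} f\pairhigh$ via the first disjunct of the $\leq_{\text{lex}}$ definition. Finally, if any $\rankname^i$ is an implicit ranking only for finite domains, then the construction above only yields a ranking range and function when $\domain$ is finite, so $\rankname$ is likewise an implicit ranking for finite domains, establishing property~(ii).

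I do not expect a significant obstacle here: the argument is a direct transfer of the pointwise proof to the lexicographic setting, with the only bookkeeping being the case split in the definition of $\conserved$ (which includes $\reduced$ as a disjunct) and the matching case split in the definition of $\leq_{\text{lex}}$. The one point to state cleanly is that the $i$ witnessing the disjunct in $\reduced$ is precisely the index witnessing the first disjunct in $\leq_{\text{lex}}$, so no further well-foundedness argument is needed beyond invoking that $\leq_{\text{lex}}$ is a wfpo on the product.
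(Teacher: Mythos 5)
Your proposal is correct and follows essentially the same route as the paper's proof: the same ranking range $(A_1\times\cdots\times A_m,\leq_{\text{lex}})$, the same tupled ranking function, and the same case split matching the disjuncts of $\conserved$ and $\reduced$ to the disjuncts defining $\leq_{\text{lex}}$. The only cosmetic difference is that the paper makes explicit that strictness in the reduction case follows from $f_i\pairlow <_i f_i\pairhigh$ forcing the tuples to differ, which your argument implies but does not spell out.
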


\begin{proof}    
    Let $\rankname^i=(\conservedsuper{i},\reducedsuper{i})$ be implicit rankings for $i=1,\ldots,m$, each with parameters $\seq x$ and define $\rankname=\mathrm{Lex}(\rankname^1,\ldots,\rankname^m)$.
    Let $\domain$ be a domain (if any of $\rankname^i$ are implicit rankings for finite domains, assume as well that $\domain$ is finite), 
    Let $(A_i,\leq_i)$ be a ranking range and $f_i$ be a ranking function of each $\rankname_i$. 
    A ranking range of $\rankname$ for $\domain$ can be defined by $(A_1\times \cdots \times A_m, \leq_{\text{lex}})$ and a ranking function can be defined by $f\pair = (f_1\pair,\ldots,f_m\pair)$.
    
    Indeed, if $\twopair\models {\conserved}(\seq {x_\low},\seq {x_\high})  = \bigvee_i
(\reducedsuper{i}(\seq {x_\low},\seq {x_\high}) \wedge \bigwedge_{j<i} 
 \conservedsuper{j}(\seq {x_\low},\seq {x_\high})
)\vee \bigwedge_i \conservedsuper{i}(\seq {x_\low},\seq {x_\high})$. If the first disjunct is satisfied then there exists an $i\in [m]$ such that $f_i\pairlow<f_i\pairhigh$ and for all $j<i$ we have $f_j\pairlow\leq_j f_j\pairhigh$, giving $f\pairlow\leq_{\text{lex}} f\pairhigh$. In the other case, for all $i\in [m]$, $f_i\pairlow\leq_i f_i\pairhigh$ which also gives $f\pairlow\leq_{\text{lex}} f\pairhigh$. 
For reduction, we only need to consider the first case, and it follows that $f_i\pairlow < f_i\pairhigh$ so $f_i\pairlow \neq f_i\pairhigh$ as required.
\end{proof}

\begin{claim}
    The linear sum constructor (\Cref{const:Lin}) is sound. 
\end{claim}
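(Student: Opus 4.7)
The plan is to follow the same template as the previous soundness proofs: fix a domain $\domain$ (assumed finite if any $\rankname^i$ is an implicit ranking for finite domains), build a ranking range and ranking function for $\rankname = \mathrm{Lin}(\rankname^1,\ldots,\rankname^m,\formula^1,\ldots,\formula^m)$ from the corresponding data for the input rankings, and then verify that the formulas $\conserved,\reduced$ imply the required inequalities under this choice.

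For the construction, let $(A_i,\leq_i)$ and $f_i$ be a ranking range and ranking function for $\rankname^i$ on $\domain$. The natural candidate is the linear sum: take $A = A_1 \uplus \cdots \uplus A_m \uplus \{\bot\}$, ordered by $\leq_\oplus$ within $A_1 \uplus \cdots \uplus A_m$ as in the prose preceding \Cref{const:Lin}, with $\bot$ placed strictly below everything (the extra element is there only to give $f$ a value for {\pairnames} that satisfy no $\formulaa^i$). Note that $\formulaa^1,\ldots,\formulaa^m$ are pairwise disjoint by construction, so every $\pair$ satisfies at most one $\formulaa^i$; define $f\pair = f_i\pair \in A_i$ if $\pair\models \formulaa^i$, and $f\pair = \bot$ otherwise. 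Well-foundedness of $\leq_\oplus$ follows from well-foundedness of each $\leq_i$ together with the fact that $m$ is finite, and finiteness of the domain is inherited from the arguments when needed.

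Next I would verify the two implications. Suppose $\twopair\models \conserved$. The formula is a disjunction of two kinds of disjuncts. In a disjunct of the first kind, there is some $i$ with $\pairlow \models \formulaa^i_\low$, $\pairhigh \models \formulaa^i_\high$, and $\twopair \models \conservedsuper{i}$; then $f\pairlow = f_i\pairlow$, $f\pairhigh = f_i\pairhigh$, and $f_i\pairlow \leq_i f_i\pairhigh$ by soundness of $\rankname^i$, giving $f\pairlow \leq_\oplus f\pairhigh$. In a disjunct of the second kind, there are $i<j$ with $\pairlow \models \formulaa^i_\low$ and $\pairhigh \models \formulaa^j_\high$; then $f\pairlow \in A_i$ and $f\pairhigh \in A_j$ with $i<j$, so $f\pairlow \leq_\oplus f\pairhigh$ by definition. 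The argument for $\reduced$ is identical, with $\leq_i$ replaced by $<_i$ in the first case and with the same strict comparison across parts in the second.

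The main obstacle here is conceptual rather than technical: the constructor's formulas only constrain behavior of $f$ on {\pairnames} that land in some part, and so one must be careful to define $f$ as a total function on $\pairset(\Sigma,\seq x,\domain)$ even though the formulas say nothing about {\pairnames} satisfying none of the $\formulaa^i$. Adding the single element $\bot$ at the bottom handles this cleanly and decouples the well-foundedness argument from the partition; once that is set up, the two case analyses go through mechanically.
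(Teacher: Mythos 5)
Your proposal is correct and matches the paper's own proof essentially verbatim: the paper likewise takes the ranking range to be $(\{\bot\}\uplus A_1\uplus\cdots\uplus A_m,\leq_\oplus)$, defines $f$ by cases on which $\formulaa^i$ is satisfied (with $\bot$ as the default), notes well-definedness from mutual exclusivity of the $\formulaa^i$, and checks the two kinds of disjuncts exactly as you do. No gaps.
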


\begin{proof}
Let $\rankname^i=(\conservedsuper{i},\reducedsuper{i})$ be implicit rankings for $i=1,\ldots,m$, each with parameters $\seq x$, let $\formula^1,\ldots,\formula^m$ be formulas with free variables $\seq x$, and define $\rankname=\mathrm{Lin}(\rankname^1,\ldots,\rankname^m,\formula^1,\ldots,\formula^m)$. 
Let $\domain$ be a domain (if any of $\rankname^i$ are implicit rankings for finite domains, assume as well that $\domain$ is finite), 
Let $(A_i,\leq_i)$ be a ranking range and $f_i$ be a ranking function of each $\rankname_i$. 
We define a ranking range for $\rankname,\domain$ by: $(\{\bot\} \uplus A_1\uplus \cdots\uplus A_m,\leq_\oplus)$
A ranking function for $\rankname$ first maps each  \pairname \ into a set $A_i$ (or $\bot$) according to the values of the formulas $\formulaa^1,\ldots,\formulaa^m$. The {\pairnames} mapped to $A_i$ are then evaluated by the corresponding ranking function $f_i$, resulting in:  
$$
f\pair = \begin{cases}
    f_i\pair &   \pair\models \formulaa^i\\
    \bot & \text{no such }i
\end{cases}
$$
This is well-defined as the $\formulaa^i$, defined by $\formulaa^i(\seq x) = \formula^i (\seq x) \wedge \bigwedge_{j<i} \neg \formula^j (\seq x)$,  are mutually exclusive.

Then, assume
$\twopair\models {\conserved}(\seq {x_\low},\seq {x_\high})=$
$ (\bigvee_{i} \conservedsuper{i}(\seq {x_\low},\seq {x_\high}) \wedge \formulaa_\low^i(\seq {x_\low}) \wedge \formulaa_\high^i(\seq {x_\high})) \vee (\bigvee_{i<j} \formulaa_\low^i(\seq {x_\low}) \wedge \formulaa^j_\high(\seq {x_\high}))$.
If the first disjunct is satisfied, then there exists $i$ such that $f_i\pairlow\leq_i f_i\pairhigh$, $\pairlow\models \formulaa_\low^i(\seq {x_\low})$ and $\pairhigh\models \formulaa_\high^i(\seq {x_\high})$. It follows that $f\pairlow\leq_{\oplus} f\pairhigh$.
If the second disjunct is satisfied, we have $i<j$ such that $\pairlow\models \formulaa_\low^i(\seq {x_\low})$ and $\pairhigh\models \formulaa_\high^j(\seq {x_\high})$, so $f\pairlow \in A_i$ and $f\pairhigh \in A_j$, it follows that 
$f\pairlow \leq_{\oplus} f\pairhigh$.
For reduction, in the first case we get instead  $f_i\pairlow <  f_i\pairhigh$ and in particular $f\pairlow\neq f\pairhigh$.
In the second case, $f\pairlow,f\pairhigh$ are in different disjuncts and so different.
\end{proof}

\begin{claim}
    The domain-pointwise constructor (\Cref{const:DomPW}) is sound. 
\end{claim}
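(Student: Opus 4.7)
The plan is to follow the sketch in the paper: for any domain $\domain$ (assumed finite), take a ranking range $(A_\insup, \leq_\insup)$ and ranking function $f_\insup$ witnessing that $\rankname^\insup$ is an implicit ranking on $\domain$, and set the range for $\rankname = \mathrm{DomPW}(\rankname^\insup, \seq y)$ to be $(\assignset(\seq y, \domain) \to A_\insup, \leq_{\text{pw}})$, with pointwise lifting of $\leq_\insup$, and ranking function
\[ f(\struct, \assign) \;=\; \lambda \otherassign \in \assignset(\seq y, \domain). \; f_\insup(\struct, \otherassign \concatfunc \assign). \]

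I would first check that $\leq_{\text{pw}}$ is a wfpo. Reflexivity, antisymmetry, and transitivity are inherited coordinatewise from $\leq_\insup$. For well-foundedness, suppose toward contradiction that $g_0 >_{\text{pw}} g_1 >_{\text{pw}} \cdots$ is an infinite strictly descending chain; for each $i$ choose an assignment $\otherassign_i \in \assignset(\seq y, \domain)$ at which $g_{i+1}(\otherassign_i) <_\insup g_i(\otherassign_i)$. Since $\assignset(\seq y, \domain)$ is finite, some $\otherassign^*$ appears among the $\otherassign_i$ infinitely often, and restricted to this coordinate the sequence $g_i(\otherassign^*)$ is weakly decreasing (by pointwise $\leq$) and strictly decreases infinitely often, contradicting well-foundedness of $\leq_\insup$.

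Then I would verify the two implications directly. For $\conserved$, the universally-quantified body unfolds into $(\struct_\low, \otherassign \concatfunc \assign_\low),(\struct_\high, \otherassign \concatfunc \assign_\high) \models \conservedsuper{\insup}$ for every $\otherassign$, which by the hypothesis on $\rankname^\insup$ gives pointwise $\leq_\insup$ between $f(\pairlow)$ and $f(\pairhigh)$, i.e.\ $f(\pairlow) \leq_{\text{pw}} f(\pairhigh)$. For $\reduced$, the conservation conjunct yields $\leq_{\text{pw}}$ as above, and the existential conjunct produces a witness $\otherassign^*$ at which $f_\insup(\struct_\low, \otherassign^* \concatfunc \assign_\low) <_\insup f_\insup(\struct_\high, \otherassign^* \concatfunc \assign_\high)$, upgrading $\leq_{\text{pw}}$ to $<_{\text{pw}}$.

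Part (ii) of the theorem is immediate: the construction only invoked a ranking range for $\rankname^\insup$ on the specific finite $\domain$, so if $\rankname^\insup$ is an implicit ranking for finite domains, $\rankname$ inherits this restriction. The main obstacle I anticipate is the well-foundedness step: pointwise lifting does not preserve well-foundedness over infinite index sets (there can be infinite antichains), so the pigeonhole argument exploiting finiteness of $\assignset(\seq y, \domain)$ is the crux of the proof and explains why DomPW is a finite-domain constructor.
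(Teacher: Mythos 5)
Your proof is correct and follows essentially the same route as the paper: the same ranking range $(\assignset(\seq y,\domain)\to A_\insup,\leq_{\text{pw}})$, the same lifted ranking function, and the same unfolding of the quantifiers to verify conservation and reduction. The only difference is that you explicitly carry out the pigeonhole argument for well-foundedness of $\leq_{\text{pw}}$ over a finite index set, which the paper states without proof in \Cref{subsec:domain_based}; this is a welcome addition of rigor rather than a different approach.
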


\begin{proof}
    Let $\rankname^{\insup}$ be an implicit ranking with parameters $\seq x$, let $\seq y,\seq z$ such that $\seq x = \seq y \concatvar \seq z$, and define $\mathrm{DomPW}(\rankname^{\insup},\seq y)$.
    Consider a \textit{finite} domain $\domain$ and let $(A_\insup,\leq_\insup)$ and $f_\insup$ be a ranking range for $\rankname^\insup$. 
    We define a ranking range for $\rankname$ by $(\assignset(\seq y,\domain)\to A_\insup,\leq_{\text{pw}})$, and a ranking function by $f\pair=\lambda \otherassign\in \assignset(\seq y, \domain). f_\insup(\struct,\otherassign\concatfunc\assign)$, for any $\pair\in \pairset(\Sigma,\seq z,\domain)$.

    Then, assume
    $\twopair\models {\conserved}(\seq {x_\low},\seq {x_\high}) = \forall {\seq y}. \conservedsuper{\insup}(\seq y \concatvar \seq {z_\low}, \seq y \concatvar \seq {z_\high})$.
    Then for any assignment $\otherassign$ to $\seq y$ we have $(\struct_\low, \assign_\low),(\struct_\high,\assign_\high),\otherassign \models \conservedsuper{\insup}(\seq y \concatvar \seq {z_\low}, \seq y \concatvar \seq {z_\high})$, which by our conventions is equivalent to $(\struct_\low, \otherassign\concatfunc\assign_\low),(\struct_\high,\otherassign\concatfunc\assign_\high) \models \conservedsuper{\insup}(\seq {y_0} \concatvar \seq {z_\low}, \seq {y_1} \concatvar \seq {z_\high})$
    By the soundness of $\rankname^\insup$ we get 
    $f_\insup (\struct_\low,\otherassign \concatfunc \assign_\low) \leq_\insup f_\insup (\struct_\high,\otherassign \concatfunc \assign_\high)$. 
    This holds for any $\otherassign\in \assignset(\seq y,\domain)$ so it follows that: 
    $\lambda \otherassign. f_\insup(\struct,\otherassign\concatfunc\assign_\low) \leq_{\text{pw}} \lambda \otherassign. f_\insup(\struct,\otherassign\concatfunc\assign_\high)$ and so 
    $f\pairlow \leq_{\text{pw}} f\pairhigh$.
    For reduction, if we have additionally $\twopair \models (\exists \seq y. {\reducedsuper{\insup}}( \seq y \concatvar \seq {z_\low}, \seq y \concatvar \seq {z_\high} ))$ then by the same argument there is assignment $\otherassign\in\assignset(\seq y,\domain)$ such that
    $f_\insup (\struct_\low,\otherassign \concatfunc \assign_\low) <_\insup f_\insup (\struct_\high,\otherassign \concatfunc \assign_\high)$ it follows that $f\pairlow \ne f\pairhigh$.
\end{proof}

For a set $Y$, a partially-ordered set $A$, and two functions $a,b\in Y\to A$, we say that $a \leq_{\text{perm}} b$ if there exists a bijection $\bijec$ on $Y$ such that $a \leq_{\text{pw}} b\circ \bijec$. 
The result,  
$\leq_{\text{perm}}$, is a \emph{preorder} on $Y\to A$. It induces a partial order on the quotient set of $Y\to A$ with respect to the equivalence relation  $\equiv_{\text{perm}}$. This equivalence relation is derived from the preorder by $a\equiv_{\text{perm}}b \iff$ $a\leq_{\text{perm}} b$ and $b\leq_{\text{perm}} a$, equivalently, there exists $\bijec$ such that $a = b\circ \bijec$.
It is useful to observe that if $[a],[b]$ are two equivalence classes 
of $\equiv_{\text{perm}}$ such that $[a]<_{\text{perm}} [b]$ then there exists a bijection $\bijec$ such that $a <_{\text{pw}} b\circ \bijec$, equivalently, we can say there is $\bijec$ such that $a\circ \bijec <_{\text{pw}} b$

\begin{claim}
    If $(A,\leq_A)$ is a wfpo and $Y$ is finite, $\leq_{\text{perm}}$ is a wfpo on the quotient set $Y\to A$ modulo $\equiv_{\text{perm}}$.
\end{claim}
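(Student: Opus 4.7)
The plan is to reduce well-foundedness of $\leq_{\text{perm}}$ on the quotient to the well-foundedness of the plain pointwise order $\leq_{\text{pw}}$ on $Y \to A$. That the induced quotient relation is a partial order (antisymmetric and transitive) is already asserted in the text, so the remaining content is well-foundedness. I would argue by contradiction: assume an infinite strictly descending chain $[a_0] >_{\text{perm}} [a_1] >_{\text{perm}} [a_2] > \ldots$ and construct representatives $b_i \in [a_i]$ such that $b_0 >_{\text{pw}} b_1 >_{\text{pw}} b_2 > \ldots$, which contradicts well-foundedness of the pointwise order.

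The representatives are built inductively. Start with $b_0 = a_0$. Given $b_i \in [a_i]$, use the observation noted in the text that $[a_i] >_{\text{perm}} [a_{i+1}]$ guarantees a bijection $\bijec_i$ on $Y$ with $a_{i+1} <_{\text{pw}} b_i \circ \bijec_i$ (applied to the representative $b_i$ of $[a_i]$). Set $b_{i+1} = a_{i+1} \circ \bijec_i^{-1}$, which is still in $[a_{i+1}]$ because $\bijec_i^{-1}$ is a bijection. A change of variable $y \mapsto \bijec_i(y)$ applied to $a_{i+1} <_{\text{pw}} b_i \circ \bijec_i$ then yields $b_{i+1} <_{\text{pw}} b_i$, completing the induction. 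The conceptual point here, and probably the only place to be careful, is that we are free to adjust the chosen representative of each equivalence class after the fact so that the pointwise descent is witnessed by the \emph{identity} bijection, rather than an ever-growing composition of permutations that would not directly manifest as a pointwise descent.

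To close the argument, I appeal to the standard fact that for finite $Y$ and a wfpo $(A, \leq_A)$, the pointwise order on $Y \to A$ is a wfpo. This can be proved by induction on $|Y|$: given an infinite descending chain, either the first coordinate strictly decreases infinitely often, yielding (after passing to a subchain) an infinite descending chain in $(A, \leq_A)$, or it eventually stabilizes, reducing the problem to a chain on the remaining coordinates by the induction hypothesis. The infinite descent $b_0 >_{\text{pw}} b_1 >_{\text{pw}} \ldots$ therefore contradicts this fact, which completes the proof. I do not expect any serious obstacle beyond the bookkeeping of representatives and bijections described above.
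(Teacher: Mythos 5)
Your proof is correct. It takes a genuinely different route from the paper's: you reduce the permuted-pointwise order to the plain pointwise order by renormalizing representatives --- replacing $a_{i+1}$ by $b_{i+1}=a_{i+1}\circ\sigma_i^{-1}$ so that each descent step is witnessed by the identity permutation --- and then you invoke (and prove by induction on $|Y|$) the separate lemma that $\leq_{\text{pw}}$ on $Y\to A$ is a wfpo when $Y$ is finite. The paper instead argues directly on the original chain: it tracks the orbit $y_0=y$, $y_{i+1}=\sigma_i(y_i)$ of each element through the successive bijections, uses finiteness of $Y$ and a pigeonhole argument to find an element $\hat y$ whose orbit coincides with the strict-decrease witness $x_i$ infinitely often, and reads off a weakly decreasing sequence $a_0(\hat y_0)\geq a_1(\hat y_1)\geq\cdots$ in $A$ that decreases strictly infinitely often. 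The combinatorial core --- finiteness of $Y$ forcing some coordinate to decrease strictly infinitely often --- is the same in both arguments, but your decomposition is more modular: the pointwise lemma you isolate is precisely the fact the paper asserts without proof when justifying the domain-pointwise constructor, so your route establishes both claims at once. The price is the small bookkeeping you already identify: checking that $b_{i+1}$ is still a representative of $[a_{i+1}]$ (it is, since $\sigma_i^{-1}$ is a bijection) and that the change of variables converts $a_{i+1}<_{\text{pw}} b_i\circ\sigma_i$ into $b_{i+1}<_{\text{pw}} b_i$; both steps are handled correctly.
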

\begin{proof}
    Assume towards contradiction that we have a sequence of equivalence classes $[a_0],[a_1],\ldots$ such that  $[a_{i+1}] <_{\text{perm}} [a_i]$ for all $i\in \nat$. From the observation above, we have bijections $\bijec_1,\bijec_2,\ldots$ such that $a_{i+1}  \circ \bijec_{i+1} <_{\text{pw}} a_i$. It follows that for every $i\in \nat$ there is an element ${x_i}$ for which $a_{i+1}(\bijec_{i+1}(x_i)) < a_i({x_i})$.
    For every $y\in Y$ define a sequence of elements $y_0,y_1,\ldots$ by: $y_0 = y$ and $y_{i+1} = \bijec_i (y_i)$ for every $i\in \nat$. From the above we get that for every $y\in Y,i \in \nat$ $a_{i+1}(y_{i+1}) \leq a_{i}(y_i)$.
    Because $\bijec_i$ are all bijections the set of elements $\{y_i \mid y\in Y\}$ is $Y$ itself, for all $i \in \nat$.
    It then follows that for every $i\in \nat$ there exists $y\in Y$ such that $x_i = y_i$, for this $y$  we have $a_{i+1}(y_{i+1})<a_i(y_i)$.
    Let $\hat y$ be such that for infinitely many $i\in \nat$ we have $\hat{y}_i = x_i$.
    For $\hat y$, the sequence $a_0(\hat {y_0}),a_1(\hat {y_1}),\ldots$. From the above, in this sequence the order is conserved in every index and reduced infinitely many times - contradicting well-foundedness of $A$.
\end{proof}

Before proving soundness of the permuted-pointwise constructor we show our encoding of permutations is sound:

\begin{claim}
Given ${\seq y^1_{\scriptstyle\rightarrow}},\seq {y^1_{\scriptstyle\leftarrow}},\ldots,\seq {y^k_{\scriptstyle\rightarrow}},\seq {y^k_{\scriptstyle\leftarrow}}$ such that for every $1\leq i < j \leq k$ 
we have ${\seq y^i_{\scriptstyle\rightarrow}}\neq {\seq y^j_{\scriptstyle\rightarrow}}, {\seq y^i_{\scriptstyle\leftarrow}}\neq {\seq y^j_{\scriptstyle\leftarrow}},{\seq y^i_{\scriptstyle\rightarrow}}\neq {\seq y^j_{\scriptstyle\leftarrow}}
     $
there exists a permutation $\bijec$ such that for every $\seq y$, the term $\seq y_\bijec =  \mathrm{ite}(\seq y = \seq {y^1_{\scriptstyle\rightarrow}},\
\seq {y^1_{\scriptstyle\leftarrow}}, \
\mathrm{ite}(\seq y = \seq {y^1_{\scriptstyle\leftarrow}},\
\seq {y^1_{\scriptstyle\rightarrow}} , \ 
\ldots, \ 
\mathrm{ite}(\seq y = \seq {y^k_{\scriptstyle\rightarrow}}, \ \seq {y^k_{\scriptstyle\leftarrow}}, \ \mathrm{ite}(\seq y = \seq {y^k_{\scriptstyle\leftarrow}}, \ \seq {y^k_{\scriptstyle\rightarrow}} , \ \seq y))))$.
\end{claim}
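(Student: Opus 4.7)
The plan is to exhibit a concrete permutation $\bijec$ built from the given tuples of elements, and then verify both that $\bijec$ is indeed a bijection and that the nested ite-term $\seq y_\bijec$ evaluates to $\bijec(\seq y)$ on every input $\seq y$.

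I would start by defining $\bijec$ on the underlying domain by cases: set $\bijec(\seq {y^i_{\scriptstyle\rightarrow}})=\seq {y^i_{\scriptstyle\leftarrow}}$ and $\bijec(\seq {y^i_{\scriptstyle\leftarrow}})=\seq {y^i_{\scriptstyle\rightarrow}}$ for each $i\in\{1,\ldots,k\}$, and $\bijec(\seq y)=\seq y$ on every other element. The first thing to check is that this definition is unambiguous: the assumption $\seq {y^i_{\scriptstyle\rightarrow}}\neq\seq {y^j_{\scriptstyle\rightarrow}}$ for $i\neq j$ rules out one element being $\seq {y^i_{\scriptstyle\rightarrow}}$ for two different $i$'s (and symmetrically for the $\scriptstyle\leftarrow$ family), while the assumption $\seq {y^i_{\scriptstyle\rightarrow}}\neq\seq {y^j_{\scriptstyle\leftarrow}}$ for $i\neq j$ guarantees that whenever some $\seq y$ equals both $\seq {y^i_{\scriptstyle\rightarrow}}$ and $\seq {y^j_{\scriptstyle\leftarrow}}$ we must have $i=j$, in which case both clauses agree. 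Once well-definedness is settled, $\bijec$ is by construction an involution, and hence a bijection, i.e.\ a permutation.

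Next I would prove $\seq y_\bijec=\bijec(\seq y)$ for every $\seq y$ by walking through the nested ite and identifying the first guard that fires. If $\seq y$ is none of the $2k$ distinguished elements, all guards fail and the default branch returns $\seq y=\bijec(\seq y)$. If $\seq y=\seq {y^i_{\scriptstyle\rightarrow}}$ for a (then necessarily unique) $i$, the distinctness assumptions applied to indices $j<i$ ensure that both guards $\seq y=\seq {y^j_{\scriptstyle\rightarrow}}$ and $\seq y=\seq {y^j_{\scriptstyle\leftarrow}}$ evaluate to false, so the first firing guard is $\seq y=\seq {y^i_{\scriptstyle\rightarrow}}$, returning $\seq {y^i_{\scriptstyle\leftarrow}}=\bijec(\seq y)$. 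The case $\seq y=\seq {y^i_{\scriptstyle\leftarrow}}$ only is symmetric, and the degenerate case $\seq y=\seq {y^i_{\scriptstyle\rightarrow}}=\seq {y^i_{\scriptstyle\leftarrow}}$ is consistent with either.

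The main obstacle is the bookkeeping in this last step: to dismiss all earlier guards one needs the distinctness $\seq {y^j_{\scriptstyle\rightarrow}}\neq\seq {y^i_{\scriptstyle\leftarrow}}$ for \emph{both} orderings of $i,j$, as stated in the prose description (``for every $i\neq j$''). Once this symmetric form of the hypothesis is in hand, the ite-walk is a routine case analysis.
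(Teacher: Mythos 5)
Your proof is correct and follows essentially the same route as the paper's: the same case-defined involution $\bijec$, the same use of the distinctness hypotheses for well-definedness, plus an explicit walk through the nested $\mathrm{ite}$ that the paper's proof leaves implicit. Your closing caveat is well taken and is in fact the substantive point here: the claim as literally stated only assumes $\seq {y^i_{\scriptstyle\rightarrow}}\neq \seq {y^j_{\scriptstyle\leftarrow}}$ for $i<j$, which does not exclude, say, $\seq {y^1_{\scriptstyle\leftarrow}}=\seq {y^2_{\scriptstyle\rightarrow}}$; in that case the $\mathrm{ite}$-term sends both $\seq {y^1_{\scriptstyle\rightarrow}}$ and $\seq {y^2_{\scriptstyle\leftarrow}}$ to that common value, so it is not injective and no permutation can agree with it. The symmetric form ``for every $i\neq j$'' used in the prose of \Cref{const:DomPerm} is what is actually required, and the paper's own well-definedness step (``from assumption we get that $i=j$'') tacitly relies on it as well, so your reading of the hypothesis is the right one.
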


\begin{proof}
    Define $\bijec$ by:
    $$\sigma(\seq y)=\begin{cases}
         \seq y^i_{\scriptstyle\leftarrow} & \exists i. \seq y = \seq y^i_{\scriptstyle\rightarrow} \\
        \seq y^i_{\scriptstyle\rightarrow} & \exists i. \seq y =  \seq y^i_{\scriptstyle\leftarrow} \\
        \seq y & \text{else}
    \end{cases}$$
    $\bijec$ is well-defined: for any value of $\seq y$, if there exists $i$ for which $\seq y =  \seq y^i_{\scriptstyle\rightarrow}$, then from assumption this $i$ is unique, so the first case may only return one value, if there also exists $j$ such that $\seq y =  \seq y^j_{\scriptstyle\leftarrow}$, then from assumption we get that $i=j$ and from both cases we get $\bijec(\seq y)=\seq y$, other cases for the proof follow similarly.
    It is easy to see similarly that $\bijec$ is a permutation.
\end{proof}

\begin{claim}
    The domain permuted-pointwise constructor (\Cref{const:DomPerm}) is sound. 
\end{claim}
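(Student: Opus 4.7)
The proof proposal closely parallels the one for $\mathrm{DomPW}$ (\Cref{const:DomPW}), but uses the preorder $\leq_{\text{perm}}$ and the quotient construction rather than the straightforward pointwise order. Given a finite domain $\domain$ and a ranking range $(A_\insup,\leq_\insup)$ with ranking function $f_\insup$ for $\rankname^\insup$, I would take the ranking range for $\rankname$ to be $((\assignset(\seq y,\domain) \to A_\insup) / {\equiv_{\text{perm}}}, \leq_{\text{perm}})$, which is a wfpo by the claim preceding this one, since $\assignset(\seq y,\domain)$ is finite. The ranking function would be $f(\struct,\assign) = [\lambda \otherassign \in \assignset(\seq y, \domain). f_\insup(\struct,\otherassign\concatfunc\assign)]_{\equiv_{\text{perm}}}$, for $(\struct,\assign) \in \pairset(\Sigma,\seq z,\domain)$.

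For conservation, I would assume $\twopair \models \conserved(\seq{z_\low},\seq{z_\high})$. The outer $\tilde\exists\bijec$ provides witness terms $\seq{y^i_{\scriptstyle\rightarrow}},\seq{y^i_{\scriptstyle\leftarrow}}$ satisfying the distinctness conditions, which by the preceding claim on the soundness of the encoding induce an honest permutation $\bijec$ of $\assignset(\seq y, \domain)$, and such that $\seq{y_\bijec}$ evaluates to $\bijec(\otherassign)$ for every $\otherassign$. The universally quantified inner conjunct then yields $f_\insup(\struct_\low, \otherassign\concatfunc\assign_\low) \leq_\insup f_\insup(\struct_\high, \bijec(\otherassign)\concatfunc\assign_\high)$ for every $\otherassign$, i.e., writing $a = \lambda\otherassign. f_\insup(\struct_\low,\otherassign\concatfunc\assign_\low)$ and $b = \lambda\otherassign. f_\insup(\struct_\high,\otherassign\concatfunc\assign_\high)$, we have $a \leq_{\text{pw}} b\circ\bijec$, hence $[a] \leq_{\text{perm}} [b]$.

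For reduction, the extra existential yields some $\otherassign^*$ with $a(\otherassign^*) <_\insup (b\circ\bijec)(\otherassign^*)$, so $a <_{\text{pw}} b\circ\bijec$. The main obstacle is to conclude $[a] <_{\text{perm}} [b]$ rather than merely $[a] \leq_{\text{perm}} [b]$. For this, I would assume for contradiction that $[a] \equiv_{\text{perm}} [b]$, so $a = b\circ\tau$ for some bijection $\tau$. Then $b\circ\tau <_{\text{pw}} b\circ\bijec$, and composing with $\bijec^{-1}$ gives $b\circ\pi <_{\text{pw}} b$ for $\pi = \tau\circ\bijec^{-1}$. Since $\assignset(\seq y, \domain)$ is finite, $\pi$ has finite order $m$, so iterating the pointwise inequality, $b \leq_{\text{pw}} b\circ\pi^{m-1} \leq_{\text{pw}} \cdots \leq_{\text{pw}} b\circ\pi \leq_{\text{pw}} b$, with a strict reduction at some $y^*$ yielding a chain $b(y^*) \geq_\insup b(\pi(y^*)) >_\insup \cdots$ that eventually returns to $b(\pi^m(y^*)) = b(y^*)$, contradicting antisymmetry of $\leq_\insup$. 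This gives $[a] <_{\text{perm}} [b]$ as required. Finally, the fact that $\rankname$ is an implicit ranking for finite domains follows directly, and if $\rankname^\insup$ is already for finite domains the conclusion only uses the same restriction.
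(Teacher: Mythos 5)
Your proof is correct and follows essentially the same route as the paper's: the same ranking range (the quotient of $\assignset(\seq y,\domain)\to A_\insup$ by $\equiv_{\text{perm}}$, ordered by $\leq_{\text{perm}}$), the same ranking function, and the same appeal to the preceding claims that the term encoding yields an honest permutation and that $\leq_{\text{perm}}$ is a wfpo on the quotient. The one point where you go beyond the paper is the reduction case: the paper simply invokes an ``equivalent characterisation'' to pass from $a <_{\text{pw}} b\circ\bijec$ to $[a] <_{\text{perm}} [b]$ without justifying why $[a]\neq[b]$, whereas you prove this explicitly via the finite order of $\pi=\tau\circ\bijec^{-1}$ and the resulting cycle $b(y^*) \geq_\insup b(\pi(y^*)) >_\insup \cdots \geq_\insup b(\pi^m(y^*)) = b(y^*)$ contradicting antisymmetry --- a correct and welcome filling of a step the paper leaves implicit.
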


\begin{proof}       
Let $\rankname^{\insup}$ be an implicit ranking with parameters $\seq x$, let $\seq y,\seq z$ such that $\seq x = \seq y \concatvar \seq z$, and define $\mathrm{DomPerm}(\rankname^{\insup},\seq y)$. Consider a \textit{finite} domain $\domain$ and let $(A_\insup,\leq_\insup)$ and $f_\insup$ be a ranking range for $\rankname^\insup$. 
We define a ranking range for $\rankname$ by $\assignset(\seq y,\domain)\to A_\insup$ modulo $\equiv_{\text{perm}}$ ordered by $\leq_{\text{perm}}$, and a ranking function by $f\pair=[ \lambda \otherassign\in \assignset(\seq y, \domain). f_\insup(\struct,\otherassign\concatfunc\assign) ]_{\equiv_{\text{perm}}}$, for any $\pair\in \pairset(\Sigma,\seq z,\domain)$.

Assuming, $\twopair \models \conserved(\seq {z_\low},\seq {z_\high}) = \tilde \exists\bijec. \
    \forall \seq {y}. \ \conservedsuper{\insup}(\seq y \concatvar \seq {z_\low}, \seq {y_\bijec} \concatvar \seq {z_\high} )$. 
    From the above, it follows that there is a permutation $\bijec$ on $\assignset(\seq y,\domain)$ such that for any $\otherassign\in \assignset(\seq y,\domain)$ $\seq y_\bijec$ is interpreted as $\sigma(\otherassign)(\seq y)$.
    Now, for any assignment $\otherassign\in \assignset(\seq y,\domain)$ we have  $(\struct_\low, \assign_\low),(\struct_\high,\assign_\high),\otherassign \models \conservedsuper{\insup}(\seq y \concatvar \seq {z_\low}, \seq {y_\bijec} \concatvar \seq {z_\high} )$, which by our conventions is equivalent to $(\struct_\low, \otherassign\concatfunc\assign_\low),(\struct_\high,\bijec(\otherassign)\concatfunc\assign_\high) \models \conservedsuper{\insup}(\seq {y_\low} \concatvar \seq {z_\low}, \seq {y_\high} \concatvar \seq {z_\high} )$
    which means $f_\insup (\struct_\low,\otherassign \concatfunc \assign_\low) \leq_\insup f_\insup (\struct_\high,\bijec(\otherassign) \concatfunc \assign_\high)$. This holds for all $\otherassign$ so we get $ \lambda \otherassign. f_\insup(\struct,\otherassign\concatfunc\assign) \leq_{\text{pw}} ( \lambda \otherassign. f_\insup(\struct,\otherassign\concatfunc\assign) )\circ \bijec$ and so $f\pairlow \leq_{\text{perm}} f\pairhigh$.
For reduction, we have very similarly, due to the equivalent characterisation of reduction above $f\pairlow <_{\text{perm}} f\pairhigh$.
\end{proof}

For a partially-ordered set $A$ with 
a partial order $\leq_A$ and a set $Y$ with a wfpo $\leq_Y$,
the set of functions $Y\to A$ can be ordered by the lexicographic partial order:
for any $a,b\in Y\to A$ we have 
$a\leq_{\text{lex}}~b \iff a(y) \leq_A b(y)$ for all minimal elements of the set $\{y\in Y \mid a(y)\neq b(y)\}$ (for $a\neq b$, this set is not empty, because $\leq_{Y}$ is a wfpo). Equivalently, for every $y\in Y$ such that $a(y)\not\leq_A b(y)$ there exists $y^*$ such that $y^* <_Y y$ with $a(y^*) <_A b(y^*)$.

\begin{claim}
    If $(A,\leq_A)$ is a wfpo and $Y$ is finite, $\leq_{\text{lex}}$ is a wfpo on $Y\to A$.
\end{claim}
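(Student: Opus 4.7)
I would prove this by induction on $|Y|$. The base case $|Y|=0$ is immediate since $Y\to A$ has one element. For the inductive step, suppose for contradiction that $a_0 >_{\text{lex}} a_1 >_{\text{lex}} a_2 > \cdots$ is an infinite strictly decreasing chain in $Y\to A$, and let $M\subseteq Y$ denote the (non-empty) set of $\leq_Y$-minimal elements of $Y$.

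The first key step is to observe that for every $m\in M$, the sequence $\bigl(a_i(m)\bigr)_{i\in\nat}$ is weakly decreasing in $(A,\leq_A)$. Indeed, fix $i$ and set $D_i = \{y\in Y : a_i(y)\neq a_{i+1}(y)\}$. If $m\notin D_i$ then $a_i(m)=a_{i+1}(m)$. If $m\in D_i$ then since $m$ is $\leq_Y$-minimal in all of $Y$, it is in particular $\leq_Y$-minimal in $D_i$; by the defining condition of $a_{i+1}\leq_{\text{lex}} a_i$ applied to this minimal element we get $a_{i+1}(m)\leq_A a_i(m)$, and since $m\in D_i$ the inequality is strict. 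Since $(A,\leq_A)$ is a wfpo, each such sequence stabilizes, and since $M$ is finite (being a subset of the finite $Y$) there is an index $N$ beyond which $a_i|_M = a_N|_M$ for all $i\geq N$.

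The second key step is to restrict the tail of the chain to $Y' := Y\setminus M$, equipped with the wfpo obtained by restricting $\leq_Y$. For $i\geq N$ the difference set $D_i$ is disjoint from $M$, so $D_i\subseteq Y'$, and the $\leq_Y$-minimal elements of $D_i$ coincide with its $\leq_{Y'}$-minimal elements (since $\leq_{Y'}$ is just the restriction). Consequently, writing $a'_i := a_i|_{Y'}$, the chain $a'_N >_{\text{lex}} a'_{N+1} >_{\text{lex}} \cdots$ is a strictly decreasing infinite chain in the lexicographic order on $Y' \to A$. Because $M$ is non-empty we have $|Y'|<|Y|$, so the inductive hypothesis gives a contradiction.

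\textbf{Anticipated obstacle.} The only subtle point is the observation that $m\in M$ forces $m$ to be a minimal element of any $D_i$ it belongs to; this is where finiteness of $Y$ is not yet used, but minimality inside the \emph{full} order $\leq_Y$ is crucial (an element that is merely minimal in some $D_i$ would not give the desired pointwise reduction). Finiteness of $Y$ enters at two places: to guarantee $M\neq\emptyset$ whenever $Y\neq\emptyset$, and to guarantee simultaneous stabilization of the finitely many sequences $\bigl(a_i(m)\bigr)_i$ for $m\in M$. With these two uses identified, the remainder is a straightforward induction.
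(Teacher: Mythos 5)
Your argument is correct, and it takes a genuinely different route from the paper's. The paper argues directly, without induction: from each strict step $a_{i+1} <_{\text{lex}} a_i$ it extracts a minimal witness $x_i$ of the difference set $D_i$, forms the set $X$ of witnesses recurring infinitely often, picks a $\leq_Y$-minimal $\hat x\in X$, and derives that $\bigl(a_i(\hat x)\bigr)_i$ is eventually weakly decreasing yet strictly decreases infinitely often, contradicting well-foundedness of $A$. You instead induct on $|Y|$ and peel off the layer $M$ of globally $\leq_Y$-minimal elements: your observation that a globally minimal $m$ is automatically minimal in every $D_i$ containing it gives the weak decrease at $m$, stabilization follows from well-foundedness of $A$ plus finiteness of $M$, and the tail of the chain then restricts to a strictly decreasing chain on $Y\setminus M$. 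Both proofs rest on the same two ingredients (minimality in $\leq_Y$ forcing a strict pointwise decrease at a minimal element of the difference set, and well-foundedness of $A$ forcing stabilization), but your inductive decomposition is tighter: it sidesteps the paper's somewhat delicate closing step, in which the element $y<_Y\hat x$ supplied by the lexicographic condition is asserted to lie in $X$ even though it need not be one of the chosen witnesses $x_i$ --- a point needing extra justification there that simply does not arise in your version. The small price you pay is having to check that restriction to $Y'$ preserves the order (minimal elements of $D_i\subseteq Y'$ agree under $\leq_Y$ and $\leq_{Y'}$), which you do. Like the paper, you establish only the absence of infinite descending chains and leave antisymmetry and transitivity of $\leq_{\text{lex}}$ implicit, so you are not proving less than the paper does.
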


\begin{proof}
    Assume towards contradiction that we have a sequence of functions in $Y\to A$, $a_0,a_1,\ldots$ such that $a_{i+1} <_{\text{lex}} a_i$ for all $i\in \nat$. 
    It follows that for every $i\in \nat$ we have $x_i\in Y$ such that $a_{i+1}(x_i) <_A a_i (x_i)$. 
    Denote by $X$ the set of elements $x\in Y$ such that $x=x_i$ for infinitely many $x$. 
    Let $\hat x$ be a minimal element of $X$ by the order $\leq_Y$ (guaranteed to exist from finiteness).
    We want to argue that from some point forward the sequence $a_i(\hat x)$ satisfies $a_{i+1}(\hat x) \leq_A a_i(\hat x)$ for all $i$.
    Indeed, there is some point from which forwards $a_{i+1}(x_i) <_A a_i(x_i)$ only for values in $X$ (again, using finiteness of $Y$). 
    Let $i$ large enough and assume towards contradiction that we have $a_{i+1}(\hat x) \not\leq a_i(\hat x)$, then from $a_{i+1} <_{\text{lex}} a_i$ we get an element $y \in Y$ such that $y <_Y \hat x$ and $a_{i+1}(y)<a_i(y)$ but this contradicts minimality of $\hat x$, as $y$ must also be in $X$ from the observation above.
    We have established that $a_{i+1}(\hat x)\leq_A a_i(\hat x)$ for all $i$ from some point forwards, but additionally, we have $a_{i+1}(\hat x)<_A a_i(\hat x)$ for infinitely many $i$ as $\hat x\in X$, so we get a contradiction to well-foundedness of $<_A$.
\end{proof}

\begin{claim}
    The domain-lexicographic constructor (\Cref{const:DomLex}) is sound. 
\end{claim}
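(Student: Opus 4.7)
The plan is to combine the ranking-range constructions used in Const. Pos and Const. DomPW. For a finite domain $\domain$, let $(A_\insup,\leq_\insup)$ and $f_\insup$ be a ranking range and ranking function of $\rankname^\insup$ on $\domain$. I will take $A = \mathrm{interp}(\orderformula,\domain) \times (\assignset(\seq y,\domain) \to A_\insup)$, and define $(\interp^\orderformula_\low,a) \leq (\interp^\orderformula_\high,b)$ iff $\interp^\orderformula_\low = \interp^\orderformula_\high$ and $a \leq_{\text{lex}} b$, where $\leq_{\text{lex}}$ is the lexicographic extension of $\leq_\insup$ with respect to the strict partial order on $\assignset(\seq y,\domain)$ induced by $\interp^\orderformula_\low$. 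The ranking function is $f(\struct,\assign) = (\interp^\orderformula(\struct),\ \lambda \otherassign \in \assignset(\seq y,\domain).\ f_\insup(\struct, \otherassign \concatfunc \assign))$, mirroring the functions used in Const. Pos and Const. DomPW. Well-foundedness reduces to the already-established fact that $\leq_{\text{lex}}$ is a wfpo on $Y \to A_\insup$ whenever $Y$ is finite and $\leq_\insup$ is a wfpo; here $Y = \assignset(\seq y,\domain)$ is finite since $\domain$ is finite.

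For conservation, suppose $\twopair \models \conserved$. The $\immutorder(\orderformula)$ conjunct gives $\interp^\orderformula(\struct_\low) = \interp^\orderformula(\struct_\high)$, and that this common interpretation is a strict partial order, matching the first components of $f\pairlow$ and $f\pairhigh$. For the second components, the remaining conjunct translates, by soundness of $\rankname^\insup$, into the following statement: for every $\otherassign \in \assignset(\seq y,\domain)$, either $f_\insup(\struct_\low,\otherassign \concatfunc \assign_\low) \leq_\insup f_\insup(\struct_\high,\otherassign \concatfunc \assign_\high)$, or there exists $\otherassign^*$ strictly smaller than $\otherassign$ in the order given by $\interp^\orderformula(\struct_\low)$ such that $f_\insup(\struct_\low,\otherassign^* \concatfunc \assign_\low) <_\insup f_\insup(\struct_\high,\otherassign^* \concatfunc \assign_\high)$. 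This is (strictly stronger than) the alternative characterization of $\leq_{\text{lex}}$ given in the paper, so the second components of $f\pairlow$ and $f\pairhigh$ are ordered by $\leq_{\text{lex}}$. For reduction, the additional conjunct $\exists \seq y.\, \reducedsuper{\insup}(\ldots)$ produces an $\otherassign$ at which the two functions differ strictly, upgrading $\leq_{\text{lex}}$ to $<_{\text{lex}}$.

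The main subtlety is matching the disjunctive structure of $\conserved$ to the alternative characterization of $\leq_{\text{lex}}$. The formula demands a witness $\otherassign^*$ whenever $\conservedsuper{\insup}$ fails at $\otherassign$, which is a stronger demand than requiring $\otherassign^*$ only when $f_\insup$ itself fails to preserve order at $\otherassign$. This strengthening is precisely why $\conserved$ is a sound, though imprecise, under-approximation of the underlying lexicographic relation, which is all the definition of implicit ranking requires. Finite-domain inheritance is automatic: the constructor is itself finite-domain, and if $\rankname^\insup$ is also only a finite-domain ranking, our construction invokes its ranking range and function only for the same finite $\domain$, where they are guaranteed to exist.
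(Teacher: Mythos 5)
Your proposal is correct and follows essentially the same route as the paper's proof: the same ranking range $\mathrm{interp}(\orderformula,\domain)\times(\assignset(\seq y,\domain)\to A_\insup)$, the same ranking function pairing $\interp^\orderformula(\struct)$ with $\lambda \otherassign.\, f_\insup(\struct,\otherassign\concatfunc\assign)$, and the same use of the alternative characterization of $\leq_{\text{lex}}$ together with the separately proven well-foundedness of $\leq_{\text{lex}}$ for finite $Y$. Your explicit remark that the formula's demand (a witness whenever $\conservedsuper{\insup}$ fails, rather than whenever $f_\insup$ fails to be conserved) is strictly stronger than what the characterization requires is exactly the under-approximation point implicit in the paper's argument.
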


\begin{proof}
Let $\rankname^{\insup}$ be an implicit ranking with parameters $\seq x$, let $\seq y,\seq z$ such that $\seq x = \seq y \concatvar \seq z$, and a formula $\orderformula$ over $\Sigma$ with free variables $\seq {y_\low},\seq {y_\high}$, and define $\mathrm{DomLex}(\rankname^{\insup},\seq y,\orderformula)$.
Consider a \textit{finite} domain $\domain$ and let $(A_\insup,\leq_\insup)$ and $f_\insup$ be a ranking range for $\rankname^\insup$.
Define $\mathrm{interp}(\orderformula)$ as above.     
We define a ranking range for $\rankname$ by $( \mathrm{interp}(\orderformula)\times (\assignset(\seq y,\domain)\to A_\insup),\leq)$ where $(\interp^\orderformula_\low,a_\low)\leq (\interp^\orderformula_\high,a_\high)$ if $\interp^\orderformula_\low = \interp^\orderformula_\high$ and $a_\low$ is smaller or equal to $a_\high$ by the lexicographic order on $\assignset(\seq y,\domain)\to A^{\insup}$ defined by $\interp^\orderformula_\low$.
This means for every $\otherassign \in \assignset(\seq y,\domain)$ such that $a_\low(\otherassign) \not\leq_\insup a_\high(\otherassign)$ there exists $\otherassign^*\in \assignset(\seq y,\domain)$ such that $\otherassign^*\concatfunc\otherassign\in \interp^\orderformula_\low$ with $a_\low(\otherassign^*) <_\insup a_\high(\otherassign^*)$.
It is easy to verify that this defines a wfpo.
We then define the ranking function by $f\pair = (\interp^\orderformula(\struct),\lambda \otherassign\in \assignset(\seq y, \domain). f_\insup(\struct,\otherassign\concatfunc\assign))$ 
where $\interp^\orderformula(\struct)$ is defined as above to be the set of assignments $\assign_\low \concatfunc \assign_\high \in  \assignset(\seq {y_\low} \concatvar \seq {y_\high},\domain)$ such that $\struct,\assign_\low \concatfunc \assign_\high \models \orderformula$.

Assume $\twopair \models {\conserved}(\seq {z_\low},\seq {z_\high}) = \immutorder(\orderformula)\wedge \forall \seq{y}.  (\conservedsuper{\insup}(\seq {y}\concatvar \seq {z_\low},
    \seq {y}\concatvar \seq {z_\high})
    \vee \exists \seq {y^*}. (
    \orderformula_\low(\seq{y^*},\seq{y})
    \wedge
    \reducedsuper{\insup}(\seq {y^*}\concatvar \seq {z_\low},
    \seq {y^*}\concatvar \seq {z_\high})
    ) 
    )$.
First, it follows that $\interp^\orderformula(\struct_\low)=\interp^\orderformula(\struct_\high)$. 
Then, for any $\otherassign\in \assignset(\seq y,\domain)$ we have either $\conservedsuper{\insup}(\seq {y}\concatvar \seq {z_\low},
    \seq {y}\concatvar \seq {z_\high})$ or $\exists \seq {y^*}. (
    \orderformula_\low(\seq{y^*},\seq{y})
    \wedge
    \reducedsuper{\insup}(\seq {y^*}\concatvar \seq {z_\low},
    \seq {y^*}\concatvar \seq {z_\high})
    ) $ satisfied.
    Now we show that $f\pairlow \leq f\pairhigh$: let $\otherassign$ be such that $f_\insup(\struct_\low,\otherassign\concatfunc\assign_\low) \not\leq_\insup f_\insup(\struct_\high,\otherassign\concatfunc\assign_\high)$, then the first disjunct is not satisfied, so there exists $\otherassign^*$ such that $
    \orderformula_\low(\seq{y^*},\seq{y})
    \wedge
    \reducedsuper{\insup}(\seq {y^*}\concatvar \seq {z_\low},
    \seq {y^*}\concatvar \seq {z_\high})
    $ is satisfied, so $\otherassign^* \concatvar\otherassign \in \interp^\ell(s_\low)$ and $f_\insup(\struct_\low,\otherassign^*\concatfunc\assign_\low) <_\insup f_\insup(\struct_\high,\otherassign^*\concatfunc\assign_\high)$
    as necessary.
    For reduction, we additionally require $\exists \seq y. {\reducedsuper{\insup}}( \seq y \concatvar \seq {z_\low}, \seq y \concatvar \seq {z_\high} )$ to be satisfied, which clearly guarantees inequality.

\end{proof}

\begin{claim}
    The domain-linear sum constructor (\Cref{const:DomLin}) is sound. 
\end{claim}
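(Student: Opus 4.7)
The plan is to combine the ideas from the soundness proofs of \Cref{const:Pos} and \Cref{const:Lin}, generalized so that the ``parts'' of the linear sum are indexed by assignments to $\seq y$ and the order between parts is the wfpo induced by $\orderformula$ on that index set. For a finite domain $\domain$, given a ranking range $(A_\insup, \leq_\insup)$ and ranking function $f_\insup$ for $\rankname^\insup$, I would take the ranking range to be
\[
A \;=\; \mathrm{interp}(\orderformula,\domain) \,\times\, \bigl( (\assignset(\seq y,\domain) \times A_\insup) \uplus \{\bot\}\bigr),
\]
with $\mathrm{interp}(\orderformula,\domain)$ as in \Cref{const:Pos}. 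Two elements are comparable only if their $\mathrm{interp}(\orderformula,\domain)$ components agree, and then the second components are ordered lexicographically: $(\otherassign_\low, a_\low) \leq (\otherassign_\high, a_\high)$ iff either $\otherassign_\low \concatfunc \otherassign_\high \in \interp^\orderformula$, or $\otherassign_\low = \otherassign_\high$ and $a_\low \leq_\insup a_\high$; the element $\bot$ is incomparable to everything except itself.

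The ranking function is then $f\pair = (\interp^\orderformula(\struct), u)$, with $u = (\otherassign, f_\insup(\struct, \otherassign \concatfunc \assign))$ whenever there exists $\otherassign \in \assignset(\seq y,\domain)$ with $(\struct, \otherassign \concatfunc \assign) \models \formulaa$, and $u = \bot$ otherwise. The key observation making this well defined is that $\formulaa(\seq y \concatvar \seq z)$ pins $\seq y$ as a minimum of $\{\seq{y'} : \formula(\seq{y'} \concatvar \seq z)\}$ with respect to the strict partial order induced by $\orderformula$, and antisymmetry of that order yields uniqueness of this minimum when it exists. Well-foundedness of $(A,\leq)$ follows because any infinite descending chain must eventually stabilize the $\otherassign$ component (each change is a strict $\orderformula$-descent in the finite set $\assignset(\seq y,\domain)$), forcing an infinite descent in $\leq_\insup$.

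For the two implications, I would mirror the proofs of \Cref{const:Pos} and \Cref{const:Lin}. When $\twopair \models \conserved$, the conjunct $\immutorder(\orderformula)$ ensures that the $\mathrm{interp}(\orderformula,\domain)$ components of $f\pairlow$ and $f\pairhigh$ agree. The first disjunct of $\conserved$ provides an $\otherassign$ that $\formulaa_\low$ and $\formulaa_\high$ both pin as the minimum in their respective states, together with a $\conservedsuper{\insup}$-step on $f_\insup$ at that $\otherassign$, which yields equality in the first slot of the lexicographic pair and $\leq_\insup$ in the second. The second disjunct supplies minima $\otherassign_\low$ and $\otherassign_\high$ with $\otherassign_\low \concatfunc \otherassign_\high \in \interp^\orderformula$, producing a strict descent in the first slot. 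The argument for $\reduced$ is identical, with $\reducedsuper{\insup}$ giving strict $<_\insup$ in the first-disjunct case.

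The main subtlety to handle cleanly is the case where no minimum of $\formula$ exists in one or both states: such states map to $\bot$ and end up incomparable. This is harmless because both disjuncts of $\conserved$ and of $\reduced$ syntactically require a minimum in each state, so the implications demanded by \Cref{def:implicit_ranking} never apply to $\bot$-valued pairs. Finiteness of $\domain$ is used both for well-foundedness of the $\orderformula$-induced order on $\assignset(\seq y, \domain)$ and (if applicable) inherited from $\rankname^\insup$, so the output is correctly classified as an implicit ranking for finite domains.
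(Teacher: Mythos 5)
Your proposal is correct and follows essentially the same route as the paper's proof: the same ranking range (an $\orderformula$-interpretation component paired with a lexicographic combination of the pinned minimum assignment and the inner rank, with $\bot$ for states lacking a unique minimum), the same well-definedness argument via antisymmetry, and the same two-case analysis matching the two disjuncts of $\conserved$ and $\reduced$. Your explicit well-foundedness argument and the remark that $\bot$-valued states never arise in the required implications are elaborations of steps the paper leaves implicit, not a different approach.
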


\begin{proof}       
Let $\rankname^{\insup}$ be an implicit ranking with parameters $\seq x$, let $\seq y,\seq z$ such that $\seq x = \seq y \concatvar \seq z$, and a formula $\orderformula$ over $\Sigma$ with free variables $\seq {y_\low},\seq {y_\high}$, and define $\mathrm{DomLin}(\rankname^{\insup},\seq y,\orderformula)$.
Consider a \textit{finite} domain $\domain$ and let $(A_\insup,\leq_\insup)$ and $f_\insup$ be a ranking range for $\rankname^\insup$.
Define $\mathrm{interp}(\orderformula)$ as above.     
We define a ranking range for $\rankname$ by $( \{\bot\} \uplus (\mathrm{interp}(\orderformula)\times \assignset(\seq y,\domain)\times A_\insup),\leq)$ where $(\interp^\orderformula_\low,\assign_\low,a_\low)\leq (\interp^\orderformula_\high,\assign_\high,a_\high)$ if $\interp^\orderformula_\low = \interp^\orderformula_\high$ and either $\assign_\low\concatfunc\assign_\high \in \interp^\orderformula_\low$ or $\assign_\low = \assign_\high$ and $a_0 \leq_\insup a_1$.
It is easy to verify that this defines a wfpo.
Define a ranking function by:
$$
f\pair = \begin{cases}    (\interp^\orderformula(\struct),\otherassign, f_\insup (\struct,\otherassign\concatfunc\assign)) &  (\struct,\otherassign\concatfunc\assign)\models \formulaa(\seq y\concatvar\seq z)\\
    \bot & \text{no such }\otherassign \text{ or not unique}
\end{cases}
$$
As long as $\pair$ satsifies the order axioms for $\orderformula$ (which is true for all interesting {\pairnames}), there is a unique $\otherassign$ for which $(\struct,\otherassign\concatfunc\assign)\models \formulaa(\seq y\concatvar\seq z)$. Recall that $\formulaa(\seq y \concatvar \seq z) = \formula(\seq y\concatvar\seq z) \wedge \forall {\seq {y''}}.( \orderformula(\seq {y},\seq {y''})\vee (\seq y = \seq {y''}) \vee \neg \formula(\seq {y''}\concatvar \seq {z}))$, so if it holds for two different assignments $\otherassign,\otherassign'$ then we will have $\otherassign\concatfunc\otherassign'$ and $\otherassign'\concatfunc\otherassign$ in the order interpretation, contradicting antisymmetry.

Now, assuming $\twopair\models \conserved(\seq {z_\low},\seq {z_\high})$, then as before, from $\immutorder(\orderformula)$ we have $\interp^\orderformula(\struct_\low) = \interp^\orderformula(\struct_\high)$. Again, we have two cases, in the first case: $\exists {\seq y},{\seq {y'}}. (\formulaa_\low(\seq y \concatvar \seq {z_\low})\wedge
\formulaa_\high(\seq {y'}\seq {z_\high}) \wedge \orderformula_\low(\seq {y}, \seq {y'})) )$ is satisfied. It follows that we have two assignments $\otherassign,\otherassign'$ such that $\otherassign\concatfunc\otherassign'\in \interp^\orderformula(\struct_\low)$ which guarantees $f\pairlow \leq f\pairhigh$ (and in fact, reduction).
In the other case, we have $\exists \seq {y}. (\conservedsuper{\insup}(\seq y \concatvar \seq {z_\low}, \seq y \concatvar \seq {z_\high}) \wedge 
\formulaa_\low(\seq {y}\concatvar\seq{z_\low})\wedge
\formulaa_\high(\seq {y}\concatvar\seq{z_\high}))$ satisfied. In this case, the second component of $f\pairlow$ and $f\pairhigh$ are equal to the same assignment $\otherassign$ and, for the third component we have $ f_\insup (\struct_\low,\otherassign\concatfunc\assign_\low)\leq_\insup f_\insup(\struct_\low,\otherassign\concatfunc\assign_\high)$ which is other case for $f\pairlow \leq f\pairhigh$. For reduction, the first case is the same, and in the second case we get $<_\insup$ instead of $\leq_\insup$ and so $f\pairlow \neq f\pairhigh$.
\end{proof}

\section{Arithmetic Interpretation of Constructors}
\label{appendix:heights}
In the soundness proofs of our constructors we provide witnesses for the implicit rankings defined by the constructors by constructing a ranking range and a ranking function in a recursive manner using set-theoretic notions.
In this section we focus on the case of finite domains and show that in many cases we can construct more intuitive ranking functions that use simple arithmetic expressions, such as summations, and map states to the natural numbers. 

For programs that manipulate infinite domains, such as integers, it is commonly not possible to find a ranking function that maps to $\nat$ (see \cite{ordinal_valued_ranking}). 
However, for systems that are finite-state in each instance but unbounded, like all of our examples, this is possible. 
In fact, we show that for implicit rankings defined by our constructors we can systematically construct functions that map states to a finite interval $[0,\ldots,h]$ where $h$ is a function of $|\domain|=n$.
Such a ranking function also gives $h$ as a bound on the termination time --- the number of transitions (or, more generally, the number of visits to fairness) required until the eventuality is reached.

For simplicity of the presentation we present %
the construction of such ranking functions with two simplifying assumptions (both of which can be lifted):
\begin{itemize}
\item We assume that any system order $\orderformula$ used in constructors has a single argument and is a total order on the domain. Accordingly, we denote the finite domain $\domain = \{0,\ldots,n-1\}$, where the numbers reflect the order. %
\item We restrict domain-based constructors to aggregate over a single %
variable $y$.
\end{itemize}

For the binary constructor $\mathrm{Bin}(\formula)$, the ranking function we define is an indicator $f\pair = 1[\pair \models\formula]$, with bound $h=1$.
For a Position-in-Order constructor $\mathrm{Pos}(t,\orderformula)$
the construction is much simplified to $f\pair = \assign(t)$ with bound $h=n-1$.

For aggregating rankings $\rankname^1,\ldots,\rankname^m$, assume they correspond to ranking functions $f_1,\ldots,f_m$ with bounds $h_1,\ldots,h_m$.
$\mathrm{PW}(\rankname^1,\ldots,\rankname^m)$ can be interpreted by the ranking function 
$f\pair = \sum_{i=1}^m f_i\pair$ with bound 
$\sum_{i=1}^m h_i$, and $\mathrm{Lex}(\rankname^1,\ldots,\rankname^m)$ can be interpreted by the ranking function 
$f\pair = \sum_{i=1}^m  (\prod_{j=i+1}^m h_j ) f_i\pair$
with bound $\prod_j (h_j+1) - 1 $.
For example, if $m=2$, we have $\mathrm{PW}(\rankname^1,\rankname^2)$ interpreted by $f\pair = f_1\pair + f_2\pair$ with bound  $h_1+h_2$, and $\mathrm{Lex}(\rankname^1,\rankname^2)$ interpreted by $f\pair = (h_2+1) f_1\pair + f_2\pair$ with bound $(h_1+1)(h_2+1)-1$.

For domain-based aggregations, assume the given implicit ranking $\rankname$ corresponds to a ranking function $f(\struct,v_y\concatfunc v_{\seq z})$ with bound $h$.
Both constructors $\mathrm{DomPW}(\rankname,y)$ and $\mathrm{DomPerm}(\rankname,y)$ can be interpreted by the ranking function $f(\struct,v_{\seq z}) = \sum_{i=0}^{n-1}
f(\struct,[y \mapsto i] \concatfunc v_{\seq z})$ (where $[y \mapsto i]$ denotes an assignment of $i$ to $y$) with bound $n\cdot h$. 
This is because both pointwise reduction and permuted pointwise reduction guarantee a reduction in the sum. (In fact, we could choose any other monotonic, commutative, associative operation, such as product, but summation gives a tight bound.)
Finally, $\mathrm{DomLex}(\rankname,y)$ can be interpreted by $\sum_{i=0}^{n-1} (h+1)^{n-1-i} f(\struct,[y \mapsto i] \concatfunc v_{\seq z} )$, which gives a tight bound of $(h+1)^n-1$.

\begin{example}
We now demonstrate this idea for both motivating examples of \Cref{sec:motivating_examples}:\begin{itemize}
    \item For \Cref{example:toystab} we use the implicit ranking defined by  $\mathrm{DomPerm(DomPW(Bin}(\formula))$ for $\formula(i,j) = \mathrm{priv}(i)\wedge \mathrm{lt}(i,j)$, this implicit ranking can be interpreted by $\sum_{i} \sum_{j} 1[\mathrm{priv}(i)\wedge \mathrm{lt}(i,j)]$ giving a bound of $n^2$ which is a tight bound on the complexity (up to a constant).
    \item For \Cref{example:binary_counter} we use the implicit ranking defined by Lex(DomLex(Bin),Pos), it can be interpreted by the expression $n\cdot\sum_{i=0}^{n-1}(2^{n-1-i} 1[a(i)]) + ptr$ bounded by $n\cdot 2^n -1 $ which is a tight bound on the complexity (up to a constant).
    \end{itemize}
\end{example}

\section{General Proof Rule }\label{appendix:proof_rule}

In this section we present the proof rule we use for our implementation, it can be seen as a generalization of the proof rule present in \Cref{subsec:using_ranking}, or as a non-temporal simplification of the proof rule from \cite{towards_liveness_proofs}.
This a proof rule for general response properties under parameterized fairness assumptions. These are properties of the form $\prop=(\bigwedge_i \forall \seq x \globally \eventually r_i(\vec x)) \to  \globally (p \to\eventually q)$. 
A transition system $\Tspec$ satisfies $\prop$ if for every trace of $\Tspec$ that satisfies $r_i(\seq x)$ for each assignment to $\seq x$ infinitely often and every state in such trace that satisfies $p$ there follows a state in that trace that satisfies $q$.

We can prove a transition system $\Tspec=(\Sigma,\iota,\tau)$ satisfies such a property by finding a \emph{closed} implicit ranking $\rankname=(\conserved,\reduced)$, closed formulas $\globalinv$ and $\triggerinv$ and a formula $\helpful$ and validating:
\begin{enumerate}
    \item $\iota \to \globalinv$ \label{item:init_gen}
    \item $\globalinv \wedge \tau \to \globalinv'$ \label{item:consec_gen}
    \item $\globalinv \wedge p \wedge \neg q \to \triggerinv$ \label{item:trigger} 
    \item $\triggerinv \wedge \tau \wedge \neg q' \to \triggerinv'$ \label{item:stability}
    \item $\triggerinv \wedge \tau \wedge \neg q' \to \tilde\varphi_\leq$ \label{item:conserved_gen}
    \item $\triggerinv \to \bigvee_{i=1}^n \exists \vec x. \helpful_i(\vec x)$ \label{item:helpful}
    \item $\triggerinv \wedge \tau \wedge \neg q' \wedge  \helpful_i(\vec x) \wedge \neg r_i(\vec x) \to \helpful_i'(\vec x)$ for all $i$ \label{item:psi_stability}
    \item     
     $\triggerinv \wedge \tau \wedge \neg q' \wedge \helpful_i(\vec x) \wedge r_i(\vec x) \to \tilde\varphi_<  $ for all $i$\label{item:reduced_gen}
\end{enumerate}
where $\tilde \varphi_\leq$ and $\tilde \varphi_<$ are the same as $\conserved$ and $\reduced$ with $\Sigma_\low$ substituted by $\Sigma'$ and $\Sigma_\high$ substituted by $\Sigma$.
The motivation for these premises is similar to that of the proof rule presented in \Cref{subsec:using_ranking}, with minor changes: $\globalinv$ is an invariant that characterises all reachable states, $\triggerinv$ is a formula that characterises all reachable states on a trace after visiting $p$ and not yet visiting $q$.
Finally, $\helpful_i(\seq x)$ is a formula (called in \cite{towards_liveness_proofs} a stable scheduler) that characterises when a visit to fairness constraint $r_i(\seq x)$ should reduce the ranking. We then verify in \cref{item:helpful} that some $\helpful_i$ holds for some assignment to $\seq x$ and in \cref{item:psi_stability} that $\helpful_i(\seq x)$ is stable if we have not yet seen $r_i(\seq x)$.
We can get the proof rule presented in \Cref{subsec:using_ranking}, as a special case by considering a single, non-parameterized fairness assumption $r$ and setting $p = \iota, \rho = \mathrm{true}, \psi_i(\seq x) =\mathrm{true}$. We show the soundness of the proof rule below.

\begin{proof}
    Let $\pi$ be a trace of $\Tspec$, and let $\domain$ be the domain of $\pi$. We have $\pi(0)\models \iota$ and for every $\ell\in \nat$ we have $\pi(\ell),\pi(\ell+1)\models \tau$. Assume $\pi \models \bigwedge_i \forall \vec x \globally \eventually r_i(\vec x)$, and $k$ is some point where $\pi(k)\models p$, and assume towards contradiction that for all $\ell \geq k$ we have $\pi(\ell)\nvDash q$.
    Let $(A,\leq)$ be a ranking range for $\domain$ and let $f$ be a ranking function for $\domain$. As $\rankname$ is closed we can write $f:\structset(\Sigma,\domain)\to A$ and we have $(\struct,\struct')\models\tilde\varphi_\leq\implies f(\struct')\leq f(\struct)$ and $(\struct,\struct')\models \tilde\varphi_<(\struct,\struct')\implies f(\struct')<f(\struct)$. We will achieve contradiction by showing that the sequence of indices $k_0,k_1,k_2,k_3,\ldots$ such that $f(\pi(k_0)),f(\pi(k_1)),f(\pi(k_2)),\ldots$ is an infinitely decreasing sequence, contradicting $\leq$ being a wfpo.

    First, from \cref{item:init_gen,item:consec_gen} we have $\pi(k) \models \rho$, then from from \cref{item:trigger,item:stability} and the assumption we do not reach $q$ we get that for all $\ell \geq k$ we have $\pi(\ell)\models \triggerinv$.
    It then follows from \cref{item:conserved} that for every $\ell \geq k$ that $\pi(\ell),\pi(\ell+1)\models \conserved$, thus $f(\pi(\ell+1))\leq f(\pi(\ell))$. It follows that    $f(\pi(k)),f(\pi(k+1)),f(\pi(k+2)),\ldots$ is a weakly decreasing sequence, then it suffices to show a sequence of points $k_0,k_1,\ldots$ such that $f(\pi(k_{j}+1)<f(\pi(k_j))$ for all $j$. 
    
    Now we construct the sequence $k_0,k_1,\ldots$. Starting at $k$, from \cref{item:helpful} we have $\pi(k) \models \vee_i \exists\seq  x.\helpful_i(\seq x)$, 
    so there exists some assignment $\assign \in \assignset(\seq x,\domain)$ such that $\pi(k),\assign \models \helpful_i(x)$.
    From the fairness assumption there exists $k_0 \geq k$ such that $\pi(k_0)\models r_i(\seq x)$, take $k_0$ to be the minimal such number.
    for all $k \leq \ell \leq k_0$ we have, by induction on $\ell$, from \cref{item:psi_stability} $\pi(\ell),\assign \models \helpful_i(\seq x)$. Then, $\pi(k_0),\assign \models \helpful_i(x)\wedge r_i(x)$ and so from \cref{item:reduced_gen} $\pi(k_0),\pi(k_0+1)\models\reduced$ so $f(\pi(k_0+1))<f(\pi(k_0))$.
    To get $k_{j+1}$ repeat the argument, starting at $k=k_j$.    
\end{proof}

\section{Totality}\label{appendix:totality}

One common property that is often verified in liveness proofs is totality (in \cite{lvr} this is called deadlock freedom, although this term has other meanings). Totality in its most simple form says that for any reachable state $\struct$ there exists a state $\struct'$ such that there is a transition from $\struct$ to $\struct'$.
When totality does not hold it is possible that a liveness property is not satisfied because there is some reachable state from which there is no transition. 

In the proof rule we present in \Cref{appendix:proof_rule} there is no premise that validates that totality of 
$\Tspec=(\Sigma,\iota,\tau)$. This is because we consider liveness properties that are proven under fairness assumptions ($\globally\eventually r$). 
In this case, we show the liveness property only for traces that satisfy the fairness assumption, and in particular are infinite. If the system is not total for some state $\struct$ this state cannot be reachable on a fair trace and is thus inessential for proving the property.
In this sense, totality should be considered not for the liveness proof but for justifying the fairness assumptions.

For the examples we considered that do not have in-built fairness assumptions we added the fairness assumption $\globally\eventually \mathrm{true}$, which is equivalent to totality. 
In this sense, we assume totality and do not prove it. 
We want to emphasis that for all our examples totality does in fact hold and our assumption is sound.
The reason we do not verify totality is that totality is not directly first-order verifiable, as a direct encoding would require second-order quantification.

One way to verify totality is to first impose a syntactic restriction on the transition formula: $\tau = \theta \wedge \eta$ where $\theta$ is a formula over $\Sigma$ and $\eta$ is a formula over $\Sigma\uplus \Sigma'$ that only allows assignments to $\Sigma'$ symbols by terms over $\Sigma$ (This can be generalized to other syntactic restrictions).
Then for any state that satisfies $\theta$ there is a transition to some state (defined by the assignments), which then reduces verifying totality to verifying that $\theta$ is a safety property of $\Tspec$. 
A similar idea is presented and used in \cite{lotan2024provingcutoffboundssafety}.
In \cite{lvr} they essentially prove totality in this manner, but they do not restrict $\eta$ to be only assignments so totality is not compeletely guaranteed. 
On the other hand, they consider, like us, liveness properties under fairness, so proving totality is also not necessary.

\section{Abstractions}\label{appendix:abstractions}

In the proof for Dijkstra's $4$-state and $3$-state protocols, Ghosh's $4$-state protocol we used a form of abstraction to simplify the quantifier structure of our satisfiability calls. 
The natural first-order modeling for this protocol is with $\mathrm{prev}$ and $\mathrm{next}$ function symbols. 
With these symbols in the system our verification conditions fall outside of the decidable fragment EPR. 
While this does not always cause the queries to timeout, it caused a timeout for our more complicated queries, even when using hints for existential quantifiers in the implicit ranking. 

To account for this we used an abstraction method, similar to what is suggested in \cite{paxos_made_epr,quantifier_instantitation} which replaces derived definitions that use $\mathrm{next},\mathrm{prev}$ by new symbols.
This can be thought of as instrumentation variables that are axiomitized to hold the correct derived values for relevant machines. 
Additionally, we add as axioms properties of the concrete system that require the next and prev system for their proof.
While the instrumented variables are not guaranteed to hold the correct values for all machines, we can show that the abstracted transition system simulates the original system, which justifies the abstraction. 
In fact, the simulation can be shown using first-order reasoning (see \cite{lotan2024provingcutoffboundssafety}).

Additionally, some of the other premises of our proof rule (\Cref{appendix:proof_rule}) that do not involve the implicit ranking are valid only finite domains.
To use the finiteness assumption for these we employ the use of induction axioms~\cite{infinite_models,induction_axioms} which eliminate spurious infinite counterexamples.

\end{document}